\documentclass[11pt]{article}

\usepackage{tikz}
\usepackage{bm}
\usepackage[normalem]{ulem}
\usepackage{parskip}
\usepackage{amsmath}
\usepackage{amssymb}
\usepackage{xspace}
\usepackage{algorithm}
\usepackage[noend]{algpseudocode}

\usepackage{graphicx} 
\usepackage{stmaryrd}
\usepackage{scalerel}
\usepackage{mathtools}
\usepackage{mathrsfs}
\usepackage{thmtools}
\usepackage{xspace}
\usepackage{hyperref}
\usepackage{tikz}
\usepackage{sectsty}

\newcommand{\sff}[1]{\llfloor #1 \rrfloor}
\newcommand{\TWM}{\textsc{Twm}\xspace}
\newcommand{\WAM}{\textsc{Wam}\xspace}

\newcommand{\SAM}{\textsc{Sam}\xspace}

\newcommand{\BIM}{\textsc{Bim}\xspace}
\newcommand{\WNM}{\textsc{OWNM}\xspace}
\newcommand{\TGM}{\textsc{Tgm}\xspace}
\newcommand{\bimrho}{0.2862}
\newcommand{\RRM}{\textsc{Rrm}\xspace}
\newcommand{\ALG}{\textsc{ALG}\xspace}
\newcommand{\OPT}{\textsc{OPT}\xspace}
\newcommand{\bbC}{\mathbb{C}}
\newcommand{\bbR}{\mathbb{R}}
\DeclareMathOperator{\Ex}{\mathbb{E}}

\newcounter{theorem} 
\newtheorem{xdefinition}[theorem]{Definition}
\newtheorem{xobservation}[theorem]{Observation}
\newtheorem{xtheorem}[theorem]{Theorem}
\newtheorem{xlemma}[theorem]{Lemma}     
\newtheorem{xproposition}[theorem]{Proposition}
\newtheorem{xcorollary}[theorem]{Corollary}
{\hspace*{\fill}\raisebox{-1pt}{\boldmath$\Box$}\end{xdefinition}}
{\hspace*{\fill}\raisebox{-1pt}{\boldmath$\Box$}\end{xobservation}}
\newenvironment{theorem}{\begin{xtheorem}\rm}{\end{xtheorem}}
\newenvironment{lemma}{\begin{xlemma}\rm}{\end{xlemma}}
\newenvironment{proposition}{\begin{xproposition}\rm}{\end{xproposition}}
\newenvironment{corollary}{\begin{xcorollary}\rm}{\end{xcorollary}}
\newenvironment{proof}{\begin{trivlist}\item[]{\bf Proof }}%
{\hspace*{\fill}\raisebox{-1pt}{\boldmath$\Box$}\end{trivlist}}

\begin{document}
\input{figs}

\title{On the Online Weighted Non-Crossing \\ Matching Problem\thanks{Boyar and Larsen were supported in part by the Danish Council for Independent Research grants DFF-0135-00018B and DFF-4283-00079B. Kamali was supported in part by the Natural Science and Engineering Research Council of Canada grant DGECR-2018-00059. Pankratov was supported in part by the Natural Science and Engineering Research Council of Canada.}}

\author{\mbox{}\hspace*{-1.5em}\begin{tabular}{ll}
        \begin{tabular}{l}
          Joan Boyar \\
          {\small\textsc{University of Southern Denmark}} \\
          \texttt{joan@imada.sdu.dk}
        \end{tabular}
        &
        \begin{tabular}{l}
          Shahin Kamali \\
          {\small\textsc{York University}} \\
          \texttt{kamalis@york.ca}
        \end{tabular}
        \\[5ex]
        \begin{tabular}{l}
          Kim S. Larsen \\
          {\small\textsc{University of Southern Denmark}} \\
          \texttt{kslarsen@imada.sdu.dk}
        \end{tabular}
        &
        \begin{tabular}{l}
          Ali Fata Lavasani \\
          {\small\textsc{Huawei, Canada}} \\
          \texttt{fata.lavasani@gmail.com}
        \end{tabular} \\[5ex]
        \begin{tabular}{l}
          Yaqiao Li \\
          {\small\textsc{Shenzhen University}} \\
          {\small\textsc{of Advanced Technology}} \\
          \texttt{liyaqiao@suat-sz.edu.cn}
        \end{tabular}
        &
        \begin{tabular}{l}
          Denis Pankratov \\
          {\small\textsc{Concordia University}} \\
          \texttt{denis.pankratov@concordia.ca}
        \end{tabular}
        \end{tabular}}

\date{March 5, 2026}

\maketitle

\begin{abstract}
    We introduce and study the weighted version of an online matching problem in the Euclidean plane with non-crossing constraints: points with non-negative weights arrive online, and an algorithm can match an arriving point to one of the unmatched previously arrived points. In the classic model, the decision on how to match (if at all) a newly arriving point is irrevocable. The goal is to maximize the total weight of matched points under the constraint that straight-line segments corresponding to the edges of the matching do not intersect. The unweighted version of the problem was introduced in the offline setting by Atallah in 1985, and this problem became a subject of study in the online setting with and without advice in several recent papers.

    We observe that deterministic online algorithms cannot guarantee a non-trivial competitive ratio for the weighted problem, but we give upper and lower bounds on the problem with bounded weights. In contrast to the deterministic case, we show that using randomization, a constant competitive ratio is possible for arbitrary weights. We also study other variants of the problem, including revocability and collinear points, both of which permit non-trivial online algorithms, and we give upper and lower bounds for the attainable competitive ratios. Finally, we prove an advice complexity bound for obtaining optimality, improving the best known bound.
\end{abstract}

\section{Introduction}

We introduce and study the following problem, which we call Online Weighted Non-Crossing Matching (\WNM).
Suppose $2n$ points $p_1, \ldots, p_{2n}$ in the Euclidean plane arrive online one-by-one. When $p_i$ arrives, its positive weight $w(p_i)\in \mathbb{R}_{> 0}$ is revealed and an algorithm has the option of matching $p_i$ to one of the unmatched previously revealed points, or leave $p_i$ unmatched. In the classic online model, the decisions of the algorithm are irrevocable. There is a non-crossing constraint, requiring that the straight-line segments corresponding to the edges of the matching do not intersect. Assuming that the points are in general position,\footnote{\,The phrase ``general position'' is a much used term in computational geometry, used when one wants to indicate that one is not interested in treating special cases, usually because this is cumbersome and uninteresting and clutters the central parts of an algorithm. One special case for our problem that we would want to disregard for some of the variants we treat would be (three or more) collinear points.} the goal is to design an algorithm that maximizes the sum of the weights of matched points.

The interest in geometric settings, particularly the Euclidean plane setting, for the matching problem stems from applications in image processing~\cite{cohen1999finding} and circuit board design~\cite{hershberger1997efficient}. In such applications, one is often required to construct a matching between various geometric shapes, such as rectangles or circles, representing vertices, using straight-line segments or, more generally, curves. Geometry enters the picture due to constraints on the edges, such as avoiding intersections among the edges, as well as avoiding edge-vertex intersections. These constraints can have a significant impact on the offline complexity of the problem.

The unweighted version of the Non-Crossing Matching problem (when $w(p_i) = 1$ for all $i \in \{1, \ldots, 2n\}$) has been studied both in the offline setting~\cite{atallah1985matching,HS1992Alg} and the online setting~\cite{bose2020non,sajadpour2021non,Kamali2022randomizedNM,lavasani2023}. We review the history of the problem in detail in Section~\ref{sec:related}. For now, it suffices to observe that an offline algorithm that knows the locations of all the points in advance can match all the points while satisfying the non-crossing constraint.
This is the notion of $\OPT$ as used in competitive analysis, and one can think of it as the best possible offline algorithm when there is no limitations on its computational complexity.
Thus, the value of $\OPT$ is always $W := \sum_{i = 1}^{2n} w(p_i)$. The performance of an online algorithm is measured by its competitive ratio, which for our problem corresponds to the fraction of $W$ that the algorithm can guarantee to achieve in the worst-case. 

It is relatively easy to see that when there are no restrictions on the weights of points, no deterministic online algorithm can be competitive, i.e., obtain a competitive ratio greater than zero (in particular, this is an immediate corollary of Theorem~\ref{th:mainLo}). We study different regimes under which the problem admits algorithms achieving non-trivial competitive ratios.

\section{Related Work}
\label{sec:related}

Given $2n$ points in $\bbR^2$ in general position, the basic non-crossing matching (NM) problem is to find a non-crossing matching with the largest possible number of edges. Observe that the minimum-length Euclidean matching is non-crossing\footnote{The proof is by contradiction. Take a smallest total length perfect matching. If it has a crossing, then look at the induced quadrilateral -- the length of the diagonals is larger than the length of a pair of opposite sides. Now, replace the diagonals by the shorter pair of sides. This new perfect matching has smaller total length, contradicting our choice of the perfect matching, so the smallest total length matching cannot have any crossings.}. Hence, a perfect NM always exists. The NM problem and its variants have been extensively studied in the offline setting. Hershberger and Suri~\cite{HS1992Alg} gave an algorithm that finds a perfect NM in time $\Theta(n\log n)$. First Atallah~\cite{atallah1985matching} and then Dumitrescu and Steiger~\cite{dumitrescu2000matching} gave efficient algorithms for the bichromatic version of the problem, where the points are divided into two subsets, and matching edges can only be formed between the two subsets. Other versions of the problem considered in the literature include requiring the NM to be stable~\cite{Ruangwises2019StableNM,Hamada2021StableNM2}, requiring two NMs to be compatible (edges in two NMs are also non-crossing if they only share endpoints)~\cite{Aichholzer2009compatibleNM}, and requiring compatible NMs to satisfy a certain diversity constraint~\cite{Misra2022diverseNM}.

Several studies considered optimization problems over all NMs. The objective functions include maximizing the sum of the Euclidean lengths of the matching edges~\cite{Alon1993longNM}, minimizing the length of the longest matching edge~\cite{Abu2014bottleneck}, and other similar combinations of minimization and maximization~\cite{Mantas2024Variants}.

Another line of research is to relax the non-crossing constraint and allow certain crossings. An important problem is to understand the size of a crossing family, that is, matching edges that are pairwise crossing. A breakthrough by Pach et al.~\cite{Pach2019crossingfamily} showed that the largest crossing family has linear size. Aichholzer et al.~\cite{Aichholzer2023kcrossing} studied the counting problem of $k$-crossing matchings.

At least two works~\cite{Balogh2005WNM,SU2011heavyNM} considered weighted NM on $n$ points, where every point has weight in $\{1,2,\ldots, n\}$. Balogh et al.~\cite{Balogh2005WNM} considered the weight of an edge to be the sum of the weights of the two endpoints modulo $n$, and studied the typical size of an NM with distinct edge weights (this is called non-crossing harmonic matching). Sakai and Urrutia~\cite{SU2011heavyNM} considered the weight of an edge to be the minimum weight of the two endpoints, and they studied lower and upper bounds for the maximum weighted NM.

In pure mathematics, NM has been studied as a tool to understand the representation theory of groups~\cite{Armstrong2013Rep2,Patrias2022Rep}. A tuple of NMs (a so-called necklace) satisfying a specific property is used to study the topology of harmonic algebraic curves associated with a polynomial over $\bbC$~\cite{Savitt2009polynomials_necklace}. 
Extremal graph problems where an NM of size $k$ plays the role of a forbidden subgraph are studied in~\cite{Aichholzer2010NMasSubgraph,Gyarfas2011Extremal}.

Besides the application in image processing and circuit design, as mentioned in the introduction, 
NM has also found other applications. One major application is in computational biology. A restricted version of NM (the points are all on a circle), and $k$-non-crossing matchings (no $k$ pairwise intersecting edges, which reduces to the standard NM when $k=2$) have been studied to understand RNA structures~\cite{Bafna1995RNA, Chen2009RNA,Vladimirov2013RNA}. 
In applications that are related to visibility problems (such as robotics) and geometric shape matching, one replaces all or a subset of the points in question by geometric objects~\cite{Rappaport2002visibilitymatching,Aloupis2015objects}. For example, when the question is to match objects to objects, then an edge $(p,q)$ between two objects $A$ and $B$ can be formed by choosing arbitrary points $p$ from $A$ and $q$ from $B$, conditioned on the edge $(p,q)$ not crossing other objects.

The online NM has only been studied very recently. Bose et al.~\cite{bose2020non} initiated the study of online (unweighted) NM and showed that the competitive ratio of deterministic algorithms is $2/3$, while Kamali et al.~\cite{Kamali2022randomizedNM} gave a randomized algorithm that, in expectation, matches approximately a fraction $0.6695$ of all points. The online bichromatic NM has also been studied in~\cite{bose2020non,sajadpour2021non}.
Finally, the advice complexity was studied in~\cite{bose2020non,lavasani2023}. In particular, Lavasani and Pankratov~\cite{lavasani2023} resolved the advice complexity of solving online bichromatic NM optimally on a circle and gave a lower bound of $n/3-1$ and an upper bound of $3n$ on the advice complexity of online NM in the plane.

\section{Contributions}
\label{sec:contributions}

The results we obtain for algorithmic and problem variants of the problem
Online Weighted Non-Crossing Matching (\WNM)
can be summarized as follows:

\begin{itemize}
    \item Deterministic algorithms with bounded weights: In the Restricted \WNM, we assume that the weights of points are restricted to the interval $[L, U]$ for some $L\le U$, known to the algorithm at the beginning of the execution. Note that by scaling, we can assume without loss of generality that $L = 1$, i.e., all weights lie in $[1,U]$. We show that the competitive ratio of any deterministic online algorithm is $O{\left(2^{-\sqrt{\log U}}\right)}$ (Theorem~\ref{th:mainLo}). We also present a deterministic online algorithm, Wait-and-Match (\WAM), which has competitive ratio $\Omega\left(2^{-2\sqrt{\log U}}\right)$ (Theorem~\ref{th:mainUp}).

    \item Randomized algorithms: We show, perhaps surprisingly, that randomization alone is enough to guarantee a constant competitive ratio for arbitrary weights. We present a simple randomized online algorithm, called Tree-Guided-Matching (\TGM), and prove that it has a competitive ratio of $1/3$ (Theorem~\ref{thm:rand_ub}). We supplement this result by showing that no randomized online algorithm can achieve a competitive ratio better than $16/17$, even for the unweighted version of the problem (Theorem~\ref{thm:rand_lb}).

    \item Revocable setting: We show that allowing revocable acceptances (see the beginning of Section~\ref{sec:revoke} for the definition of the model) permits one to obtain a competitive ratio of approximately $\bimrho$ by a deterministic algorithm, even when the weights are unrestricted (Theorem~\ref{thm:revoke_ub}). We supplement this result by showing that no deterministic algorithm with revoking can achieve a competitive ratio better than $2/3$ (Theorem~\ref{thm:revoke_negative}).

    \item Collinear points: We also study this problem when the points are not in general position but are all located on a line (see Section \ref{sec:collinear}). We show that, even in the unweighted case, neither revoking nor randomization alone helps achieve a non-trivial competitive ratio. However, we present a randomized algorithm with revoking that achieves a competitive ratio of at least $0.5$ in the unweighted version.
    
    \item Advice complexity: We present a new algorithm, called Split-and-Match (\SAM), that uses $\lceil \log C_n \rceil<2n$ bits of advice  (see beginning of Section~\ref{sec:advice} for the definition of the model) to achieve optimality (Theorem~\ref{thm:sam}), where $C_n$ is the $n^\text{th}$ Catalan number. This improves upon the previously known bound of $3n$ on the advice complexity of the problem~\cite{lavasani2023}. Since \SAM achieves a perfect matching, it does not matter whether the given points are weighted or not.
\end{itemize}

\section{Preliminaries}

The input to the matching problems considered in this work is an online sequence $I = \langle p_1,\ldots, p_{2n}\rangle$ of points \emph{in general position}, where $p_i$ has a positive real-valued weight $w(p_i) \in \mathbb{R}_{>0}$. We use $W$ to denote the total weight of all the points, i.e., $W = \sum_{i = 1}^{2n} w(p_i)$. For the Restricted \WNM, the weights are assumed to lie in the interval $[1,U]$ for some known value of $U$, which is a part of the problem definition, and, thus, not a part of the input. Upon the arrival of $p_i$, an online algorithm must either leave it unmatched or match it with an unmatched point $p_j$ ($j<i$), provided that the line segment between $p_i$ and $p_j$, denoted by $\overline{p_ip_j}$, does not cross the line segments between previously matched pairs of points. The objective is to maximize the total weight of matched points. For an online algorithm, $\ALG$ (respectively, offline optimal algorithm $\OPT$), we use $\ALG(I)$  (respectively, $\OPT(I)$) to denote the total weight of points matched by the algorithm on input $I$.
When it is clear from the context, the symbol $\overline{pq}$ is also used to denote the full line passing through the two points $p$ and $q$, dividing a convex region into two sub-regions.

We say that a deterministic online algorithm $\ALG$ is \emph{$\rho$-competitive} if there exists a constant $c$ such that for every input sequence $I$ we have
\[\ALG(I) \ge \rho \cdot \OPT(I) - c.\]
For a randomized algorithm, the above inequality is replaced by the following
\[\mathbb{E}(\ALG(I)) \ge \rho \cdot \OPT(I) - c.\]
If $c=0$, then we call the competitive ratio $\rho$ \emph{strict}, and we say that $\ALG$ is \emph{strictly $\rho$-competitive}. If $c \neq 0$, then the ratio can be called asymptotic. Note that for the Restricted \WNM, we allow $c$ to depend on $U$ when considering asymptotic competitiveness. Thus, an algorithm achieving asymptotic competitive ratio $\rho$ is allowed to leave a constant number of points unmatched (regardless of their weights) beyond the $(1-\rho)$-fraction of $W$.

\subsection{Warm-Up: The Two-Weight Case}

As a warm-up, we first consider  the case where points can have only weight $1$ or weight $U>1$. Several of our later algorithms and analyses are generalizations of, or inspired by, the two-weight case.
Also, as we shall see, in this case our algorithm achieves a tight  competitive ratio of $1/3$ as $U\to \infty$. Incidentally, $1/3$  appears again as a bound on the competitive ratio for our randomized algorithm in Section \ref{sec:randomized}.

We consider the following algorithm that we call
``Two-Weight-Matching'' (\TWM).
Assume the points appear in a bounding box,~$B$. 
Throughout its execution, \TWM maintains a \emph{convex partitioning} of~$B$. Initially, there is only one region formed by the entire~$B$. The algorithm matches two points only if they appear in the same convex region. 
Whenever two points in a convex region $R$ are matched, the line segment between them is extended until it hits the boundary of $R$, which results in partitioning $R$ into two smaller convex regions.

When a point $p$ arrives in a region $R$ in which there are already some unmatched points, \TWM matches $p$ with another point in the following cases: (1) if $p$ has weight $U$, then $p$ matches with a point $q$ of the largest weight; (2) if $p$ has weight $1$ and there is a point of weight $U$ in $R$, $p$ matches with that point; (3) otherwise, $p$ and all unmatched points in $R$ have weight $1$, and $p$ matches with a point $q$ in $R$ if this can be done in such a way that none of the new convex regions end up empty.

{\bf Observation}: it is an invariant during the execution of \TWM that if a region contains more than one point, all the points have weight~$1$, and that no region contains more than three unmatched points.

\begin{theorem}
    For the two-weight OWNM with $U\geq 3$, the competitive ratio of \TWM is at least $1/3$.
\end{theorem}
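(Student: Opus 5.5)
Since $\OPT = W$, we only need to show that \TWM matches weight at least $W/3$, up to an additive constant depending on $U$ (asymptotic ratio). We would do this by charging the weight of unmatched points to matched points, region by region in the final partition.

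\textbf{Structure of the final partition.} By the Observation, at termination every region containing two or more unmatched points contains only weight-$1$ points, and at most three of them. A region with a single unmatched point may hold a weight-$U$ point. Next we count regions. Each match splits one region into two, so if the algorithm makes $m$ matches, the final partition has $m+1$ regions. Hence the number of unmatched points is at most $3(m+1)$. This bound is too weak by itself, since the unmatched points could have weight $U$. We therefore treat heavy and light unmatched points separately.

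\textbf{Heavy unmatched points.} A weight-$U$ point $p$ stays unmatched only if it arrived in a region that was empty of unmatched points (cases (1)/(2) force a match otherwise). After that, every later point arriving in its region is matched to $p$ or to the heaviest point there. Indeed, case (2) makes any later weight-$1$ arrival match a heavy point, and a later heavy arrival matches the largest weight. So a final region contains at most one unmatched heavy point, and that point is alone in its region. We would then charge each such lone point to the matching edge whose extension created (or last split) its region. The main obstacle is to show that each edge is charged by few regions. I would argue that every split produces exactly two children, so an edge is charged at most twice. Then I would bound the weight of each charged edge: an edge with an endpoint of weight $U$ has weight at least $U+1$. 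A light-light edge is created only in case (3), and only so that neither new region is empty. A heavy point arriving later in such a region gets matched, because the region already contains a point, so that region cannot hold a lone unmatched heavy point. Making this rigorous is where the nonemptiness rule in case (3) is essential: regions born from light-light splits always contain a point, so a heavy arrival there is matched. The remaining subtle case is a region that becomes empty after a heavy match. Its sibling may then receive a lone heavy point, which is charged to that heavy edge of weight $\ge U+1$, and at most two such regions are charged to each heavy edge. This gives heavy-unmatched weight at most $2U$ per heavy edge of weight $\ge U+1$. That is not good enough, so I would refine: after a heavy match, at most one of the two children is empty (the other side is where the rest of the points lie), unless the region had only these two points. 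The bound I would aim for is $\le U$ charge per heavy edge, i.e.\ unmatched $\le$ matched on this part.

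\textbf{Light unmatched points.} Each region with light unmatched points has at most three, and, by case (3), lies in a region created by a nonempty split. I would charge these at most $3$ units to the parent edge, whose weight is at least $2$. Combining with the heavy charges should give unmatched weight at most $2\cdot$matched weight, plus $O(U)$ for the initial root region. This yields $\ALG \ge W/3 - O(U)$. The hypothesis $U\ge 3$ is used when comparing charges $3$ versus $U+1$ on mixed edges, ensuring that light charges against heavy edges stay within ratio $2$. I expect balancing the two charging schemes on the same edge to be the delicate step.
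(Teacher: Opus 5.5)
\textbf{The gap: the light points.} Your heavy-point analysis matches the paper's. A lone unmatched weight-$U$ point can only lie in a final region whose creating edge has a weight-$U$ endpoint, since the non-emptiness rule in case (3) excludes light--light parents. Each edge creates two regions, so the number $r_U$ of such regions is at most $2m_U$, where $m_U$ is the number of matched weight-$U$ points.

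The gap is in the light part and in how you combine the two. You charge each final region to the edge that created it, but an edge can create two final regions. For a light--light edge this gives a charge of up to $6$ against matched weight $2$. To see this, take a region with three unmatched light points. Let a fourth light point be matched so that one old point lies on each side. Then add two more light points to each side. Both children are final, each with three unmatched light points. So your scheme only bounds the unmatched weight by $3$ times the matched weight, i.e., ratio $1/4$. Charging each final region to its creating edge cannot do better than this.

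\textbf{The missing ingredient.} You actually wrote it down and then discarded it: the final partition has exactly one more region than there are edges. So on average each edge accounts for one final region, not two. The paper keeps this global count, $r_1 + r_U \le \frac{m_1+m_U}{2}+1$, where $r_1$ is the number of final regions containing light unmatched points and $m_1$ the number of matched weight-$1$ points. It then observes that each heavy region uses up one of these slots, so that
\[ W \le m_1 + Um_U + 3r_1 + Ur_U \le m_1 + Um_U + 3\left(\tfrac{m_1+m_U}{2}+1\right) + (U-3)r_U . \]
Only at this point is $r_U \le 2m_U$ substituted. This is where $U\ge 3$ is needed, since the coefficient $U-3$ must be non-negative. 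The result is $W \le 3\,\TWM + 3$.

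\textbf{The heavy-side refinement.} Your proposed bound of at most $U$ charge per heavy edge is false. A heavy point arriving in a region with a single unmatched point is matched and leaves both children empty. Each child can then receive a lone heavy point that stays unmatched. This is exactly the tight instance of the paper's lower bound. The refinement is also unnecessary, since $2U \le 2(U+1)$ already suffices for ratio $1/3$. The real difficulty is the light--light edges, and only the global region count handles them.
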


\begin{proof}
    At the completion of~\TWM, let $m_1$ be the number of points of weight~$1$ that are matched, and $m_U$ be the number of points of weight~$U$ that are matched.  Thus, $\TWM = m_1 + Um_U$. Furthermore, let $r_1$ be the number of regions with at least one unmatched point of weight~$1$ and $r_U$ the number of regions with a single unmatched point of weight~$U$. By the observation above, these are all the possible regions containing unmatched points, and these two types of regions are disjoint.

Since it takes two matched points to divide a region into two,
$r_1+r_U \leq \frac{m_1+m_U}{2} + 1$,
so $r_1 \leq \frac{m_1+m_U}{2} + 1 - r_U$.

Assume that there is an unmatched point of weight~$U$ in region $R_1$ and let
$R_2$ be the other region that was created at the same time as $R_1$
by matching two points $p$ and $q$. If both $p$ and $q$ have weight~$1$, then both $R_1$ and $R_2$ contain at least one point of weight~$1$ and, therefore, $R_1$ cannot have an unmatched point of weight~$U$, which is a contradiction. So, either $p$ or $q$ has weight~$U$.
Thus, the match between these two points were made without regards to possibly leaving regions empty, so, following the match, points of weight~$U$ could arrive into empty regions $R_1$ and $R_2$. Hence, $r_U \leq 2 m_U$.

Since $\OPT$ matches all points,
\begin{align*}
\OPT & \leq m_1 + Um_U + 3r_1 + Ur_U \\[.5ex]
     & \leq  m_1 + Um_U + 3(\frac{m_1+m_U}{2} + 1 - r_U) + Ur_U, \mbox{ by the above} \\[.5ex]
     & =     m_1 + Um_U + 3(\frac{m_1+m_U}{2} + 1) + (U-3)r_U \\[.5ex]
     & \leq  m_1 + Um_U + 3(\frac{m_1+m_U}{2} + 1) + (U-3)2m_U, \mbox{ since $U\geq 3$} \\[.5ex]
     & \leq  \frac52 m_1 + 3Um_U + 3\frac{m_U}{2} + 3 - 6m_U \\[.5ex]
     & \leq  3m_1 + 3Um_U = 3 \TWM, \mbox{ since  $m_U\geq 1$.}    
\end{align*}
Thus, $\TWM \geq \frac13 \OPT - 1$, and the result follows.
\end{proof}

We now show that no algorithm is better than $1/3$-competitive, for
arbitrarily large~$U$.
\begin{theorem}
    For the two-weight OWNM, the competitive ratio of any deterministic online algorithm is at most $1/3+ 2/(3U+3)$.
\end{theorem}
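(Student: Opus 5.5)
The plan is an adaptive adversary that forces the ratio one gadget at a time. Each gadget consists of three points of weight $1$ and three points of weight $U$, so $\OPT$ gains $3U+3$ on it; a perfect non-crossing matching always exists, as recalled in Section~\ref{sec:related}. The adversary will steer the instance so that on each gadget the algorithm matches at most one heavy point, and at most $U+3$ weight in total. That gives $\frac{U+3}{3U+3}=\frac13+\frac{2}{3U+3}$ per gadget.

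To defeat the additive constant $c$ of asymptotic competitiveness, I would repeat the gadget $k$ times, each inside its own tiny disk. The disks are placed along a convex curve far apart, so that any edge between two different gadgets would block or cross the constructions and is never useful. Letting $k\to\infty$ makes $c$ negligible. Alternatively, one could argue directly that cross-gadget edges give no advantage, because later gadgets can be placed inside a region that is empty of unmatched points.

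Inside a gadget, light points arrive first, one at a time, and the adversary watches whether the algorithm matches two of them. If it matches a light pair $\overline{xy}$, that segment is treated as a wall. The remaining heavy points are then placed very close to the midpoint of $\overline{xy}$, on alternating sides and nested inside thin triangles. The nesting is chosen so that every pair of heavy points would have to be joined by a segment crossing $\overline{xy}$ or crossing the edge just created. In that case each heavy point can only be matched to a light point, and the geometry leaves at most one reachable unmatched light point. The algorithm therefore gains at most $2+U+1=U+3$.

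If the algorithm refuses to match the light points, the adversary stops the light points at three. It then introduces the heavy points one at a time, each placed so that it sees only light points. Once the algorithm matches a heavy point $h$ to a light point $\ell$, the segment $\overline{h\ell}$ is used as a wall to isolate the later heavy points from each other and from any remaining unmatched points. If the algorithm never matches a heavy point, it gains at most $3$ on the gadget, or at most $2$ if it matched only lights, which is even better for the adversary.

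The main obstacle is the concrete geometric placement that makes every heavy-to-heavy edge impossible in all branches while keeping general position. In particular, a heavy point arriving on a side containing no unmatched point must still be unable to reach the other heavy points. My first attempt would be to put the heavy points inside the sliver between the most recent matching edge and the boundary of the gadget's disk, in a recursive, ever-thinner wedge. Each new heavy point is then separated from earlier ones by the most recently created segment. I would also fall back on adjusting the counts, for instance using a different number of light points per gadget, if the clean $3{+}3$ gadget fails in some branch. The bookkeeping should then give $\ALG\le k(U+3)+O(1)$ against $\OPT=k(3U+3)$.
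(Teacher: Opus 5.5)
There is a genuine gap. The fixed gadget of three light and three heavy points cannot be enforced. You credit $\OPT$ with $3U+3$ on every gadget and cap \ALG at $U+3$, but in several branches the algorithm simply declines to create the walls your construction relies on.

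\begin{itemize}
\item In the light-pair branch, suppose \ALG matches $\overline{xy}$ and then declines both $h_1$ (side $A$) and $h_2$ (side $B$). The only segment in the gadget is $\overline{xy}$. A single segment never separates two points on the same side of its supporting line. So wherever $h_3$ is placed, it can be matched to $h_1$ or $h_2$, and \ALG gets $2U+2$. ``The edge just created'' is a wall only if \ALG chooses to create one.
\item The no-light-match branch fails too. With three unmatched lights, when $h_1$ arrives \ALG can match it to the light $\ell$ lying angularly between the other two as seen from $h_1$ (or to any of them if $h_1$ lies inside their triangle). Then the two remaining lights lie on opposite sides of the line $h_1\ell$. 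Every position for $h_2$ now sees an unmatched light on its own side, and again \ALG collects $2U+2$.
\item If instead \ALG declines $h_1$ while no segment exists, then $h_2$ sees $h_1$.
\end{itemize}

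In these branches the gadget ratio is about $2/3$, not $1/3$. So your fallback of ``adjusting the counts'' is the heart of the matter, not a detail. Your claim that cross-gadget edges are never useful is also unsupported. Unmatched heavy points left in one gadget can be paired with heavy points of a later gadget unless some matched segment separates them. In the plane, as opposed to on a circle, a chord does not provide such separation.

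The missing idea is that the adversary must adapt how many points, and of which weight, it puts into each region, with $\OPT$'s credit per region varying by branch. The paper proceeds as follows.
\begin{itemize}
\item It puts all points on a circle, so every chord is a true wall.
\item It first gives $2k$ light points. If \ALG makes fewer than $k/3$ matches, it stops. Otherwise there are at least $k/3$ regions, each charged to its own chord, which also disposes of the additive constant.
\item Heavy points go only into regions with at most one unmatched point. An empty region gets one unmatchable heavy point.
\item A region with one unmatched light gets one heavy point. If \ALG declines it, nothing more is given (\ALG $2$, \OPT $U+3$). If \ALG accepts, both sides of the new chord are empty and each receives an unmatchable heavy point (\ALG $U+3$, \OPT $3U+3$); this is the extremal case.
\item Regions with two or three unmatched points receive further \emph{light} points instead. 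This continues until \ALG leaves at least four unmatched points (ratio at most $1/3$), creates an empty subregion (which then gets an unmatchable heavy point), or splits the region into one-point regions handled as above.
\end{itemize}
Stopping as soon as \ALG declines a heavy point, and never offering a heavy point to a region with two or more unmatched points, is exactly what rules out the $2U+2$ outcomes above.
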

\begin{proof}
  All points given by the adversary in this proof are given on a circle and all chords are non-crossing. Since all points are given on a circle, this also means that when two points are matched, a chord is created.
  This effectively prevents points from the region on one side of the chord to be matched with points in the region on the other side in the future processing
  by the online algorithm.
  For a given algorithm, \ALG,
  the adversary gives $2k$ points of weight~$1$.
  \ALG matches some pairs, creating $s$ chords.
  If $s < \frac{k}{3}$, the sequence ends, and we have a ratio of
  $\frac{2s}{2k}<\frac13$.
  Otherwise, we consider $s$ of the $s+1$ regions \ALG has divided
  the circle into separately.
  We ignore one region, so that we can associate exactly one
  chord with each of the remaining $s$ regions.
  No further points will be given in that region
  and any matching by \ALG of points from the first $2k$ points
  are counted in other regions,
  so we will not undercount \ALG's profit by ignoring that region.

  Let $S_4$ denote the regions with at least four unmatched points,
  and let $S_0$, $S_1$, $S_2$, and $S_3$ denote the regions with
  0, 1, 2, or 3 points, respectively.
  We treat a number of cases below. When we have treated a case, we
  list the profit of both \ALG and \OPT in parenthesis.  For a given
  region, we count the two points that are matched by the chord
  associated with the region.  Thus, both \ALG and \OPT always get a
  profit of at least two in any region (except the one we ignore).
  \OPT can of course process the points offline,
  matching all of the $2n$ points. However, for the
  proof, we count and compare the weighted points for both \ALG and
  \OPT in each region separately (sometimes two regions together when creating new regions).

  For the $S_0$-regions, the adversary gives one point of weight~$U$
  that \ALG cannot match (\ALG: $2$, \OPT: $U+2$).

  For the $S_1$-regions, the adversary gives one point of weight~$U$.
  If \ALG does not
  match the now two points, no further points are given to that region
  (\ALG: $2$, \OPT: $U+3)$.
  Otherwise, two points of weight~$U$ are given on either side of the line
  representing the latest match, and \ALG cannot match these
  (\ALG: $U+3$, \OPT: $3U+3$).

  For the $S_2$-regions, the adversary gives one more point of
  weight~$1$.  If \ALG does not match this point to any of the two
  others, we move to the $S_3$ case. Otherwise, \ALG has created an
  empty region and the adversary gives a point of weight~$U$ in that
  region that \ALG cannot match (\ALG: $4$, \OPT: $U+5$).

  For the $S_3$-regions, the adversary gives one more point of
  weight~$1$.  If \ALG does not match that point, no further points
  are given, and we are in the $S_4$ case.  Otherwise, if \ALG creates
  a region with no points, the adversary gives a point of weight~$U$
  in that region that \ALG cannot match (\ALG: $4$, \OPT: $U+6$).
  Otherwise, it has created two regions with one unmatched point each.
  We treat those two independently as in the $S_1$ case.

For the $S_4$-regions, no further points are given (\ALG: $2$, \OPT:
  at least~$6$).

  Having treated the cases, we observe that for large~$U$, the largest
  ratio between \ALG and \OPT is
  $\frac{U+3}{3U+3}= \frac13+\frac{2}{3U+3}$.
\end{proof}

\section{Deterministic Algorithms for Restricted \WNM}

In this section, we move beyond the two-weight case and consider
algorithmic decisions based on a classification schemes, where all
points with a weight falling within a given class are treated the same.

\subsection{Point Classification.}
\label{ssec:point_class}
In both lower and upper bound arguments, we use a point classification, based on parameters $k\in \mathbb{N}$ and $U\in \mathbb{R}$, explained here. 
Let 
    $k= \lceil \sqrt{\log U} \rceil$ 
and define values of $a_0, a_1, \ldots, a_k$ so that
    $$a_0 = 1, a_k = U, r=a_1/a_0=a_2/a_1=\ldots=a_k/a_{k-1},$$ 
which implies 
that 
    $r = U^{1/k}$ 
and 
    $a_i = r^i$. 
For a given value $w\in[1,U]$, define $\sff{w}$ to be the largest $a_i$ such that $a_i \leq w$. 
In what follows, a point with weight $w$ is said to have type $i$ if $\sff{w} = a_i$.
Thus, there are $k+1$ distinct types, with type $k$ containing only the value
$U$.
The type of a line segment between two matched points $x$ and $y$ is defined by the type of the end-point with larger weight, that is, $\overline{xy}$ has type $i$ if one of its endpoints has type $i$ and the other endpoint has type at most $i$.

\subsection{Negative Result}

\begin{theorem}\label{th:mainLo}
For the Restricted \WNM problem,
the asymptotic competitive ratio of any deterministic online algorithm is $O{\left(2^{-\sqrt{\log U}}\right)}$.
\end{theorem}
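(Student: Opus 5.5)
We will prove the bound with an adaptive adversary. It generalizes the two-weight lower bound (the previous theorem) to the $k+1$ weight levels $a_0=1, a_1=r, \ldots, a_k=U$ with $r=U^{1/k}$ and $k=\lceil\sqrt{\log U}\rceil$. Since $r=2^{\log U/k}\ge 2^{\sqrt{\log U}-1}$ roughly and $k\approx\sqrt{\log U}$, it suffices to force every deterministic algorithm to have ratio $O(1/r + 1/2^{k})$ or similar. The target bound is $O(2^{-k})$ up to constants: at each level the algorithm either gains little relative to the next level or loses a constant fraction. All points are placed on a circle, so every matched pair creates a chord that separates the regions, exactly as in the two-weight proof. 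Asymptotic competitiveness (the additive constant $c$) is handled by repeating the whole construction many times in disjoint arcs, or by scaling the number of points per phase, so that $c$ becomes negligible.

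The construction proceeds in phases $i=0,1,\ldots,k$. In phase $i$ the adversary gives many points of weight $a_i$ inside each currently ``active'' region, where active means the region is still nonempty and the adversary continues there. Consider how the algorithm treats those points.

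\textbf{Case 1: the algorithm leaves most of them unmatched.} The adversary stops. \OPT gains their weight, while \ALG's total is dominated by earlier, lighter phases. Each earlier phase has weight at most $a_{i-1}=a_i/r$ per point, and the phase sizes are controlled, so \ALG collects only an $O(1/r)$-type fraction.

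\textbf{Case 2: the algorithm matches a lot.} Each match creates a chord and hence new regions. As in the $S_0$ and $S_1$ cases of the two-weight proof, the adversary then places heavier points of weight $a_{i+1}$ into regions where \ALG cannot match them. These are regions that are empty, or that contain a single point whose matching in turn creates two fresh empty sides.

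The accounting should give the following. If $A_i$ denotes \ALG's profit at level $i$ and $O_i$ denotes \OPT's, then in every branch either \ALG gains at most a $1/r$ fraction of what \OPT gains at the next level, or, by the $1/3$-type argument, \ALG keeps only half of the newly offered heavier weight. Here matching heavy points halves the opportunities because each matched heavy point creates empty regions that are filled by still heavier unmatched points. Iterating over the $k$ levels, the fraction that \ALG can secure shrinks by a constant factor, at least $2$, per level. This yields ratio $O(2^{-k}+ k/r)$. With $k=\lceil\sqrt{\log U}\rceil$ we have $r\ge 2^{\sqrt{\log U}-1}$, so both terms are $O(2^{-\sqrt{\log U}})$, up to the polynomial factor $k$. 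That factor can be absorbed by slightly adjusting the constant in the exponent, or avoided by making the phase sizes grow geometrically so that earlier levels contribute only a geometric sum.

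The main obstacle is the amortization. We must show that the per-level loss multiplies, rather than adds, across the $k$ levels, while keeping the adversary's point counts bounded so that the final ratio is not swamped by lighter levels. To handle this, we will first attempt an inductive invariant on a single region: for a region entered at level $i$ with one ``seed'', the ratio $\ALG/\OPT$ restricted to that region and its descendants is at most $\beta_i$, with $\beta_k\le 1$ and $\beta_i\le \max(c/r,\ \beta_{i+1}/2+c/r)$. Solving this recursion gives $\beta_0=O(2^{-k}+1/r)$. The main work is verifying the per-region recursion by mimicking the $S_0$–$S_4$ case analysis of the two-weight proof at each level.
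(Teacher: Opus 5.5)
There is a genuine gap. The step that is supposed to produce the factor $2^{-k}$ is asserted rather than constructed, and the mechanism you point to cannot produce it.

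Placing heavier points into regions that are empty or hold a single unmatched point (the $S_0$/$S_1$ trick) only penalizes regions with at most one matchable point, where it already yields $O(1/r)$. A region holding several unmatched points is, as in the $S_4$ case, charged only a constant ratio. So an algorithm that keeps a constant-size buffer of unmatched points next to each chord is never pushed below a constant by a level-by-level $S_0$--$S_4$ analysis. Nothing in your sketch explains where the multiplicative halving in $\beta_i\le\beta_{i+1}/2+c/r$ would come from: the two-weight argument gives an absolute ratio of about $1/3$, not a factor applied on top of the next level's ratio.

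Your invariant is also stated for a region with a single seed, which is the easy case. The hard regions contain between a handful and roughly $2^k$ unmatched light points, and your invariant does not track that count. Presenting many points at every level is moreover counterproductive: heavy points can then be matched with each other, so the algorithm essentially faces a rescaled copy of the level-$0$ game.

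The missing idea is that what must halve per level is the pool of matchable light points, not the ratio. In the paper, only weight-$1$ points are given in bulk, until the algorithm has made $m$ matches or $m2^k$ points have arrived. Each chord is then charged with one region, and a region containing at least $2^k-1$ unmatched points gives ratio $O(2^{-k})$. In a region with fewer than $2^k-1$ unmatched points, the adversary presents \emph{single} points $p_1,p_2,\ldots$ of weights $a_1,a_2,\ldots$. Whenever $p_i$ is matched to a weight-$1$ point $q$, the adversary continues on the side of $\overline{p_iq}$ holding at most half of the remaining unmatched points. The pool is exhausted within $k$ steps, so some $p_j$ is stranded while the matched weight in the region is at most $2+(j-1)+a_1+\cdots+a_{j-1}=O(a_j/r)$. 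This gives $O(2^{-k}+1/r)$ overall.

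Separately, your intermediate bound $O(2^{-k}+k/r)$ would not suffice. Since $k/r=\Theta\bigl(\sqrt{\log U}\,2^{-\sqrt{\log U}}\bigr)$, adjusting the constant in the exponent proves a strictly weaker statement.
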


\begin{proof}
Let $\ALG$ be any online deterministic algorithm. 
We use an adversarial argument. The adversary gives all points on a circle, $C$, so any match the algorithm makes creates a chord in the circle, dividing a previous region into two. At any point in time, the adversary gives a point in an \emph{active region} of $C$, which is formed by one or two \emph{arcs}, the segments of the circle bounded by two consecutive points, in the boundary of $C$. Initially, the entire circle forms the active region. The adversary's strategy is to maintain a mapping from unmatched points to matched points to ensure the ratio between the total weight of matched points and unmatched points is $O{\left(2^{-\sqrt{\log U}}\right)}$. Note that this implies the ratio between the total weight of matched points and all points is also $O{\left(2^{-\sqrt{\log U}}\right)}$. 

The adversary starts by fixing an arbitrarily large number, $m$ (this is required to guarantee that our bound is asymptotic). The adversary gives points of weight $1$ in arbitrary positions on the circle until either the algorithm matches $m$ pairs of points or it reaches $m2^{k}$ points on the circle. In the latter case, the competitive ratio is at most 
    $O(2^{-k}) = O{\left(2^{-\sqrt{\log U}}\right)}$.

Therefore, we may assume that $\ALG$ eventually matches $m$ pairs of points, creating non-intersecting chords, and $m+1$ regions. Now, make each matched pair responsible for a distinct region created, though with the first matched pair being responsible for two regions, initially the first two regions. Suppose a new chord $\overline{xy}$ divides region $R$ into two. Let $\{ x_R, y_R\}$ be the responsible pair for $R$, $R_1$ the side of $R$ that has $\overline{x_Ry_R}$ on its boundary, and $R_2$ the other side. Leave $\{ x_R,y_R\}$ responsible for $R_1$ and make $\{ x,y\}$ responsible for $R_2$. This ensures that each matched pair, except the first, is responsible for exactly one region, and the first is responsible for two.

For each region $R$, the adversary makes $R$ the active region, runs the following procedure and continues with the next region until it has covered all regions. Let $\{ x_R,y_R\}$ be the responsible pair of points for $R$. Consider the following two cases, depending on the number of unmatched points in $R$:

\noindent\textbf{Case 1.} If the number of unmatched points in $R$ is at least $2^k-1$, the adversary does not give any point in $R$ and continues to the next region. In this case, we map the unmatched points in $R$ to the matched pair $\{ x_R,y_R\}$. Note that at least $2^k-1$ points of weight 1 are mapped to a segment of total weight 2. The ratio of the weight of matched points to the unmatched points will be at most
        $2/(2^k-1) \in O{\left(2^{-\sqrt{\log U}}\right)}$.

\noindent\textbf{Case 2.} If the number of unmatched points in $R$ is less than $2^k-1$, the adversary plans to give a sequence of points, $P = \langle p_1, p_2, \ldots, p_k\rangle$, with weights $a_1,a_2,\ldots,a_k$, respectively, in ascending order of their weights, in the following manner (see Fig.~\ref{fig:general-lower-bound}). The point $p_1$ of weight $a_1$ appears in an arbitrary position in $R$ (on the circle). Upon the arrival of a point $p_i$ with weight $a_i$, $i \in \{1,\ldots, k\}$, either $\ALG$ matches it with a point of weight 1 or leaves it unmatched. In the latter case, the adversary does not give more points in $R$ and continues with the next region. 

In the former case, when $\ALG$ matches a point $p_i$ of weight $a_i$ with a point $q$ of weight $1$, make a side of $\overline{p_iq}$ that contains at most half of the unmatched points, the active region. The adversary continues giving the remaining points of $P$ in the active region.
 Thus the unmatched points on the opposite side of $\overline{p_iq}$ stay unmatched, since $\overline{p_iq}$ is between the new points and these unmatched points.
 
Therefore, after matching $p_i$ and $q$, the number of unmatched points of weight $1$ that can match with future points in $P$ decreases by a factor of at least $2$. Let $p_j$ be the first point in $P$ that the algorithm leaves unmatched. Given that the adversary can give up to $k$ points, and there are initially fewer than $2^{k}-1$ unmatched points in $R$, there exists such $p_j$ of weight $a_j$. At this point, the adversary ends the procedure for $R$ and continues with the next region.

The total weight of points in matched pairs in $R$ before the arrival of $p_j$ is:
\begin{align*}
    M  &= \underbrace{2}_{\text{for } (x_R,y_R)}+ \underbrace{j-1}_{\text{endpoints of weight } 1}+\underbrace{a_1+a_2+\ldots+a_{j-1}} _{\text{endpoints with weight } a_i }
    \qquad \\ &\le \qquad \frac{r^j-1}{r-1} + j
\end{align*}
Given that the unmatched point $p_j$ is of weight $a_j$, the ratio between the weight of matched points and unmatched points is at most
    $M/r^j \in O(1/r) = O{\left(2^{-\sqrt{\log U}}\right)}$.

Given that each matched pair is responsible for at least one region, the above procedure creates a mapping of matched points to unmatched points with a weight ratio of $O(2^{-\sqrt{\log U}})$ in all cases, as desired.
\end{proof}

\begin{figure}[htb]
\centering
\includegraphics[scale=0.7]{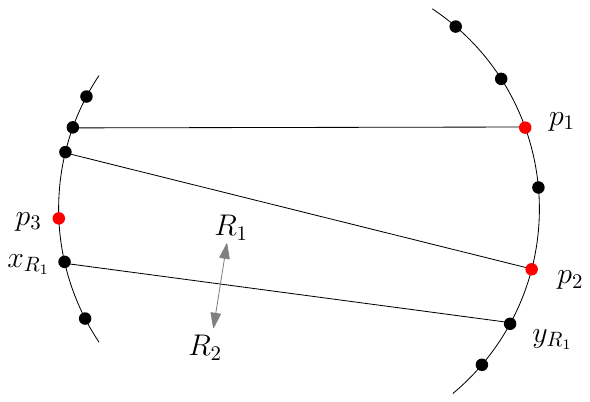}
    \caption{An illustration of the adversary's strategy for Case 2 with $k=3$. The two arcs form the active region. The unnamed points have weight 1. Suppose in the first phase $\ALG$ matched $x_{R_1}$ and $y_{R_1}$, which became responsible for region $R_1$. Note that the number of unmatched points (of weight 1) in $R_1$ is 6, which is less than $2^k-1 =7$. Thus, in the second phase, the adversary plans to give points $p_1, p_2,p_3$ of weights $a_1, a_2, a_3$ into $R_1$. Suppose $\ALG$ matches $p_1$ of weight $a_1$; then the adversary gives $p_2$ and $p_3$ below the line segment between the matched pair (there are fewer unmatched points there). Similarly, after the point $p_2$ of weight $a_2$ is matched, the adversary gives $p_3$ to the side of the resulting segment with no unmatched points. This ensures that some point of weight $a_i$ (here $a_3$) stays unmatched and is mapped to the matched pairs.}
    \label{fig:general-lower-bound}
\end{figure}
 
\subsection{Positive Result: The Wait-and-Match Algorithm}

We propose an algorithm called ``Wait-and-Match'' (\WAM). 
Assume the points appear in a bounding box $B$. 
Throughout its execution, \WAM maintains a \emph{convex partitioning} of $B$. Initially, there is only one region formed by the entire $B$. As we will describe, the algorithm matches two points only if they appear in the same convex region. 
Whenever two points in a convex region $R$ are matched, the line segment between them is extended until it hits the boundary of $R$, which results in partitioning $R$ into two smaller convex regions. We use the same point classification as defined in Section~\ref{ssec:point_class}.

Suppose a new point $p$ appears, and let $R$ denote the convex region of $p$.
In deciding which point to match $p$ to (if any), the algorithm considers all unmatched points in $R$ in non-increasing order of weights. Let $q$ be the next point being considered, and let $i$ be the maximum of the type of $p$ and the type of $q$.
The algorithm matches $p$ with $q$ if there are at least $2^{k-i}-1$ unmatched points on each side of $\overline{pq}$.  
If no suitable $q$ exists, $p$ is left unmatched.

\medskip\noindent {\bf Example:}
Suppose $k=2$. Then $a_0=1, a_1 = \sqrt{U},$ and $a_2 = U$. Let $p$ be a point with weight $1$. Upon the arrival of a point $p$, the algorithm matches $p$ with any point $q$ of weight $U$, if there are at least $2^{2-2} -1 = 0$ points on each side of $\overline{pq}$. That is, if there is an unmatched point of weight $U$ in the region, the algorithm would match $p$ to it unconditionally.
Similarly, if there are no unmatched points of weight $U$ in the region, the algorithm tries to match $p$ with any point $q$ of weight in $[a_1=\sqrt{U}, a_2=U)$, provided there is at least $2^{2-1}-1 = 1$ point on each side of $\overline{pq}$. Finally, if previous scenarios do not occur, the algorithm tries to match $p$ with any point $q$ of weight in $[a_0=1, a_1=\sqrt{U})$ provided there are at least $2^{2-0}-1=3$ unmatched points on each side of $\overline{pq}$. This will happen if there were at least $7$ unmatched points in the region.
\hspace*{\fill}\raisebox{-1pt}{\boldmath$\Diamond$}

\medskip
To analyze the algorithm, we 
match each unmatched point to a matched pair. For the sake of the analysis, we introduce two ``imaginary'' points $(-\infty,0)$ and $(-\infty,1)$ of weight $U$ and treat them as if they were matched before the input sequence is revealed. Suppose a new point, $p$, arrives in a region $R$. We map $p$ to the most recent segment that forms a boundary of the region $R$.  See Figure~\ref{fig:upper-bound} for an illustration of this mapping.

\begin{figure}
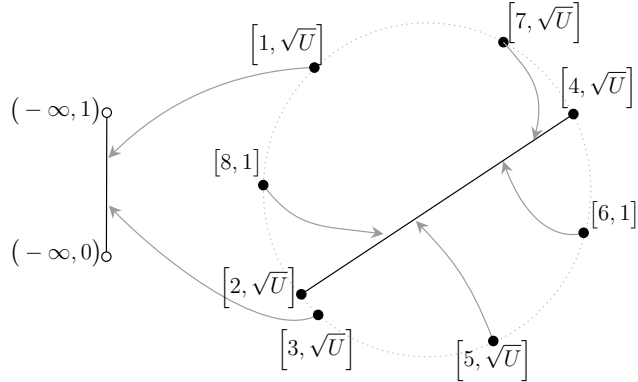

    \centering
    \scalebox{.6}{\figMapping}
    \caption{An illustration of the mapping used to analyze \WAM. In this example, we have $k=2$ and $8$ points with weights in $\{1,\sqrt{U}, U\}$. Here, $[t, w]$ indicates the $t^{\text{th}}$ point in the input sequence having weight $w$. Note that points $1$ and $3$ are mapped to the segment corresponding to the imaginary points $(-\infty,0)$ and $(-\infty,1)$ of weight $U$.}
    \label{fig:upper-bound}
\end{figure}

\begin{lemma}\label{lem:lowerIndx}
 Every point of type $i$ is mapped to a segment of type at least~$i$.
\end{lemma}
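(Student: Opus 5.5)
The plan is to prove the contrapositive for unmatched points. (The mapping is only used to charge points that \WAM leaves unmatched. For a point $p$ that \WAM matches, the mapping is irrelevant and we argue nothing about it.) Let $p$ be a point of type $i$ that arrives in region $R$. Let $s$ be the most recent segment on the boundary of $R$, and let $j$ be its type. We assume $j<i$ and show that \WAM then matches $p$, so $p$ is never an unmatched point mapped to a segment of smaller type. If $s$ is the imaginary segment, then $j=k\ge i$ and there is nothing to prove, so we may assume $s=\overline{xy}$ is a real matched segment.

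\textbf{Step 1: $R$ still contains many unmatched points.} When $\overline{xy}$ was created, the type of the matched pair was $j$. By the rule of \WAM, at that time there were at least $2^{k-j}-1$ unmatched points on each side of $\overline{xy}$, in particular on the side that became $R$. Since $s$ is the most recent boundary segment of $R$, no match has been made inside $R$ since then. Hence none of those points has been matched, and later arrivals into $R$ only increase the count. So when $p$ arrives, $R$ contains $N\ge 2^{k-j}-1\ge 2^{k-i+1}-1$ unmatched points other than $p$, using $j\le i-1$. (This works because $R$ is a region of the partition, so ``unmatched points on each side'' at creation time refers exactly to points of the new subregions.)

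\textbf{Step 2: a balanced partner exists.} For any candidate $q\in R$, the type of the pair $\{p,q\}$ is at least $i$. The required threshold $2^{k-\max}-1$ is therefore at most $2^{k-i}-1$. It suffices to find a $q$ such that each side of $\overline{pq}$ (within $R$) contains at least $2^{k-i}-1$ of the other $N-1$ unmatched points, since $N-1\ge 2(2^{k-i}-1)$. To find it, sweep a line through $p$ by rotating it $180^\circ$. By general position, the line passes over one point at a time. The number of points strictly on its left goes from some $a$ to $(N)-a$ minus the point on the line, changing by one at each event. A standard discrete intermediate-value argument then gives a moment when the line passes through some point $q$ and leaves at least $\lfloor (N-1)/2\rfloor\ge 2^{k-i}-1$ points on each side. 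Since \WAM scans all unmatched points of $R$ (in non-increasing weight order), it either matches $p$ to some earlier-scanned point or reaches such a $q$ and matches $p$ to it. In either case $p$ is matched, which gives the contradiction.

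I expect the main obstacle to be Step 1. The key fact is the bookkeeping that ``the most recent boundary segment'' certifies that the unmatched population of $R$ has only grown since that segment was drawn, together with the correct accounting of which side's count was guaranteed by the matching rule. The halving argument of Step 2 is routine once general position is invoked.
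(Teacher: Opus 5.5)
Your proposal is correct and follows the paper's own argument. The rule that created the boundary segment leaves at least $2^{k-j}-1 \ge 2^{k-i+1}-1$ unmatched points in $R$, and none of them can have been matched since. A bisecting partner therefore gives $p$ at least $2^{k-i}-1$ points on each side, which contradicts \WAM leaving $p$ unmatched. Your rotating-line argument and your separate treatment of the imaginary segment just spell out details the paper leaves implicit.
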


\begin{proof}
 For the sake of contradiction, suppose a point $p$ with type $i$ arrives in the region $R$ and gets mapped to $\overline{p_R,q_R}$  of type at most $i-1$.
 
 Without loss of generality, assume $p_R$ arrived after $q_R$.
By the definition of the algorithm, at the time $p_R$ appeared, there were at least $2^{k-i+1}-1$ unmatched points in $R$ (otherwise, $p_R$ would not have been matched with $q_R$). These unmatched points are still unmatched when $p$ appeared (otherwise, $R$ should have been partitioned, and $p$ should have been mapped to some other segment). 
 Thus, when $p$ appeared, the algorithm could match it with the point that bisects these unmatched points, and there would be at least $(2^{k-i+1}-2)/2 = 2^{k-i}-1$ points on each side of the resulting line segment. This contradicts the fact that the algorithm left $p$ unmatched.  
\end{proof}

\begin{lemma}\label{lem:upperIndx}
    Let $s$ be any line segment between two matched points. For any $i$, at most $2^{k-i+2}-2$ unmatched points of type $i$ are mapped to $s$.
\end{lemma}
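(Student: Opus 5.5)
Fix a segment $s=\overline{p_Rq_R}$ and a type $i$. Every point mapped to $s$ arrived in a region $R'$ that had $s$ as its most recent boundary segment, and such a region is one of the two regions created when $s$ was drawn. A region stops receiving points mapped to $s$ as soon as a later segment splits it. So the points mapped to $s$ are exactly the points that arrive in one of the two child regions $R_1, R_2$ of $s$ before that child is next split. The plan is to bound, for each child $R_j$, the number of unmatched type-$i$ points mapped to $s$ by $2^{k-i+1}-1$. Summing over the two children then gives $2(2^{k-i+1}-1)=2^{k-i+2}-2$.

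Now fix a child $R_j$ and consider the period during which $R_j$ is unsplit. In this period every arriving point is either matched, which splits $R_j$ and ends the period, or left unmatched. So the unmatched points mapped to $s$ that lie in $R_j$ are exactly the points that arrived during this period and were left unmatched, and all of them remain unmatched points of $R_j$ while the period lasts.

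Suppose, for contradiction, that $2^{k-i+1}$ of them have type $i$. Let $p$ be the last of these, and consider the moment $p$ arrives. At that moment $R_j$ already contains at least $2^{k-i+1}-1$ other unmatched points, namely the earlier type-$i$ points mapped to $s$. We argue as in the proof of Lemma~\ref{lem:lowerIndx}. Among these points, take one, say $q$, such that the line $\overline{pq}$ bisects the remaining $2^{k-i+1}-2$ points, giving $2^{k-i}-1$ on each side; in general position such a $q$ exists by rotating a line around $p$. The segment $\overline{pq}$ has type at least $i$, so the matching threshold is $2^{k-\max}-1\le 2^{k-i}-1$, and the requirement is satisfied.

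The main subtlety is that \WAM scans candidates in non-increasing weight order and matches $p$ with the \emph{first} candidate that satisfies the condition. But any candidate at all being acceptable already means $p$ gets matched, contradicting that $p$ was left unmatched. Hence each child receives at most $2^{k-i+1}-1$ unmatched type-$i$ points, which gives the bound.

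One more check concerns the threshold. If $q$ had a higher type than $p$, the threshold would be smaller, which only helps. So we only need that the type of $\overline{pq}$ is at least $i$, and this holds because $p$ itself has type $i$.

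Finally, the imaginary segment is handled the same way. It is treated as bounding the whole box $B$ and has a single child, so it receives at most $2^{k-i+1}-1$ such points, which is within the bound.
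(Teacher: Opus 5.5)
Your proof is correct and is essentially the paper's argument. The paper assumes $2^{k-i+2}-1$ type-$i$ points are mapped to $s$, uses pigeonhole to place $2^{k-i+1}$ of them in one of the two regions created by extending $s$, and then matches the last arrival $p$ to the point $q$ that bisects the others — exactly your per-child bound of $2^{k-i+1}-1$ followed by summing, and your explicit checks (points mapped to $s$ arrive only while a child region is unsplit, and the imaginary segment has a single child) fill in details the paper leaves implicit.
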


\begin{proof}
For the sake of contradiction, assume at least $2^{k-i+2}-1$ points of type $i$ are mapped to $s$. Then, there must be at least $\lceil (2^{k-i+2}-1)/2\rceil = 2^{k-i+1}$ points of type $i$ in a convex region $R$ formed by extending $s$. At the time the last of these points, say $p$, arrives, it could be matched to the point $q$ that bisects the other points; there will be at least $(2^{k-i+1}-2)/2 = 2^{k-i}-1 $ points on each side of $\overline{pq}$. Since $\overline{pq}$ is of type $i$, the algorithm must have matched $p$ with $q$, which contradicts the fact that $p$ and $q$ are unmatched and mapped to $s$.  
\end{proof}

\begin{lemma}\label{lem:mapping}
    For $U \geq 16$, the total weight of unmatched points mapped to a segment of type $j$ is at most $a_{j+1} 2^{k-j+3}$.
\end{lemma}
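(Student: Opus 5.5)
Fix a segment $s$ of type $j$. By Lemma~\ref{lem:lowerIndx}, every unmatched point mapped to $s$ has type $i\le j$. A point of type $i$ has weight less than $a_{i+1}$; for $i=k$ the weight is exactly $U=a_k$, which is also at most $a_{j+1}$ when $j=k$ under any reading of $a_{k+1}$. By Lemma~\ref{lem:upperIndx}, at most $2^{k-i+2}-2<2^{k-i+2}$ points of type $i$ are mapped to $s$. So I will bound the total weight mapped to $s$ by
\[
\sum_{i=0}^{j} a_{i+1}\,2^{k-i+2} \;=\; 2^{k+2}\sum_{i=0}^{j} r^{i+1}2^{-i}\;=\;2^{k+2}\, r\sum_{i=0}^{j}(r/2)^{i}.
\]
(For $j=k$ the last term uses $U$ in place of $a_{k+1}$, which only makes it smaller, so the same bound applies.)

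This is a geometric series with ratio $r/2$. I need $r/2\ge 2$, i.e.\ $r\ge 4$, so that the sum is at most twice its largest term:
\[
\sum_{i=0}^{j}(r/2)^i\le \frac{(r/2)^{j+1}}{r/2-1}\le 2(r/2)^{j}.
\]
Substituting gives a total of at most $2^{k+2}\cdot r\cdot 2\,(r/2)^j = r^{j+1}2^{k-j+3}=a_{j+1}2^{k-j+3}$, which is the claimed bound.

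The main obstacle is showing $r\ge 4$ from $U\ge 16$. We have $r=U^{1/k}=2^{\log U/k}$ with $k=\lceil\sqrt{\log U}\rceil$, so I need $\log U\ge 2k$, i.e.\ $\log U\ge 2\lceil\sqrt{\log U}\rceil$. Set $x=\log U\ge 4$.
\begin{itemize}
\item For $x\ge 4$ we have $\sqrt{x}\ge 2$, hence $x\ge 2\sqrt{x}$; the question is whether the ceiling spoils this.
\item Write $\lceil\sqrt x\rceil\le\sqrt x+1$. Then $x\ge 2\sqrt x+2$ holds exactly when $\sqrt x\ge 1+\sqrt3\approx 2.73$, so this crude estimate only covers $x\ge 7.46$.
\item For $4\le x\le 9$ I argue directly: there $\lceil\sqrt x\rceil\le 3$. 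If $\lceil\sqrt x\rceil=2$, the condition needs $x\ge 4$, which holds. If $\lceil\sqrt x\rceil=3$, then $x>4$ and the condition needs $x\ge 6$, which fails for $x\in(4,6)$.
\end{itemize}
So $r\ge4$ is false for $x\in(4,6)$, where $r=2^{x/3}$ lies between about $2.5$ and $4$. In that window I would not use the factor $2$ bound on the geometric sum. Instead I would bound the sum by $\frac{(r/2)^{j+1}}{r/2-1}$ directly, using that $r>2^{4/3}$ there, and recheck the constant. Since $k=3$ in that range, the finitely many $j$ can also be verified by hand.
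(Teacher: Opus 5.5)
Your main computation is the same as the paper's proof: per-type counts from Lemma~\ref{lem:upperIndx}, weights below $a_{i+1}$, and a geometric series with ratio $r/2\ge 2$. You are also right that the paper's justification of $r\ge 4$ is faulty. Checking $U=16$ is not enough, because $r=2^{\log U/\lceil\sqrt{\log U}\rceil}$ is not monotone in $U$; for example, $U=32$ gives $k=3$ and $r=2^{5/3}<4$. In fact, for $U\ge 16$, $r\ge 4$ holds exactly when $U=16$ or $U\ge 64$.

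\textbf{The gap is your patch for $4<\log U<6$, which does not work.}

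First, summing the series exactly cannot give the stated constant there. We have $\frac{(r/2)^{j+1}}{r/2-1}=\frac{r}{r-2}(r/2)^j$, and this exceeds $2(r/2)^j$ for every $r<4$.

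Second, checking the finitely many $j$ by hand also fails. For $k=3$, $j=2$, your bound is $32r+16r^2+8r^3$. This is at most $a_3 2^{4}=16r^3$ only if $r^2-2r-4\ge 0$, i.e.\ $r\ge 1+\sqrt5$, i.e.\ $\log U\ge 3\log(1+\sqrt5)\approx 5.08$. For $j=3$, your bound $32r+16r^2+12r^3$ exceeds $8r^4$ whenever $\log U<4.38$.

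Third, even the sharper count $2^{k-i+2}-2$ does not rescue $j=2$. It requires $5r^2-7r-15\ge 0$, which fails for $U$ slightly above $16$.

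So Lemmas~\ref{lem:lowerIndx} and~\ref{lem:upperIndx} plus this summation cannot establish the statement for $16<U<2^{5.08}\approx 34$. Your proposal leaves that range unproved.

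The simplest repair is to change the hypothesis to $U=16$ or $U\ge 64$. That is all Theorem~\ref{th:mainUp} needs, since it is stated for sufficiently large $U$. Alternatively, accept a weaker constant in the window: there $r/(r-2)<5$, which gives $5\,a_{j+1}2^{k-j+2}$ and also suffices for the theorem.

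To get the lemma exactly as stated for all $U\ge 16$, you need an additional ingredient. One option is the following cap: no region of \WAM ever holds more than $2^{k+1}-1$ unmatched points. The reason is that if $N\ge 2^{k+1}-1$ are present, the newcomer $p$ has a partner $q$ whose line leaves at least $\lfloor (N-1)/2\rfloor\ge 2^k-1$ of them on each side, which meets the threshold of every type.

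The unmatched points mapped to $s$ on one side of it are all simultaneously unmatched in that region. Combining this cap with the per-region, per-type caps $2^{k-i+1}-1$ gives, for $k=3$:
\begin{itemize}
\item for $j=2$, a total of at most $2(3r^3+7r^2+5r)\le 16r^3$;
\item for $j=3$, a total of at most $2(4r^3+7r^2+4r)\le 8r^4$.
\end{itemize}
Both inequalities hold for all $r>2^{4/3}$. The cases $j\le 1$ already work with your sum, since $r\ge 2$.
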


\begin{proof} 
    Note that a point of type $i$ has weight at most $a_{i+1} = r^{i+1}$. Hence, by Lemma~\ref{lem:lowerIndx} and Lemma~\ref{lem:upperIndx}, the total weight of unmatched points mapped to a segment of type $j$ is at most
    \[
        \sum_{i=0}^j r^{i+1} 2^{k-i+2}
        = 2^{k+2} r \sum_{i = 0}^j \left(\frac{r}{2}\right)^i 
        = 2^{k+2} r \frac{(r/2)^{j+1} - 1}{r/2-1}  
        \le a_{j+1} 2^{k-j+3}. 
    \]
    Here, we have assumed that $r \ge 4$, which holds for $U\geq 16$, since
    $16^{\frac{1}{\lceil\sqrt{\log 16}\rceil}}=(2^4)^{\frac{1}{2}}=4$.
\end{proof}

\begin{theorem}\label{th:mainUp}
For sufficiently large $U$,
the competitive ratio of the deterministic online algorithm \WAM for the Restricted \WNM problem is 
    $$\Omega\left(2^{-2\sqrt{\log U}}\right).$$
\end{theorem}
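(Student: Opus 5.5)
The plan is a charging argument that combines Lemma~\ref{lem:mapping} with the fact that every segment of type $j$ carries weight at least $a_j$. By definition of the mapping, every unmatched point is mapped to exactly one segment. This segment is either a real matched pair produced by \WAM or the imaginary segment between $(-\infty,0)$ and $(-\infty,1)$. Let $s$ be a real segment of type $j$. One endpoint of $s$ has type $j$, so its weight is at least $a_j = r^j$. Hence $s$ contributes at least $a_j$ to $\WAM(I)$. By Lemma~\ref{lem:mapping}, which applies since $U \geq 16$, the unmatched weight charged to $s$ is at most $a_{j+1}2^{k-j+3} = r\, a_j\, 2^{k-j+3} \le r\, 2^{k+3} a_j$. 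So each real segment together with the unmatched points charged to it has total weight at most $(1 + r2^{k+3})$ times the weight of its matched endpoints.

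The imaginary segment has type $k$. By Lemma~\ref{lem:mapping} (reading $a_{k+1}$ as $rU$, or noting that only types up to $k$ occur, which gives bound $O(U 2^{k})$), the unmatched weight charged to it is bounded by a quantity depending only on $U$. We absorb this into the additive constant $c$, which is allowed to depend on $U$ for asymptotic competitiveness. Summing over all segments gives
\[
W = \OPT(I) \le \bigl(1 + r2^{k+3}\bigr)\WAM(I) + c_U ,
\]
and therefore $\WAM(I) \ge \OPT(I)/(1 + r 2^{k+3}) - c_U'$.

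It remains to evaluate the ratio. We have $k = \lceil\sqrt{\log U}\rceil$, so $2^{k} \le 2\cdot 2^{\sqrt{\log U}}$. Also $r = U^{1/k} = 2^{\log U/k} \le 2^{\log U/\sqrt{\log U}} = 2^{\sqrt{\log U}}$. Hence $r 2^{k+3} \le 16 \cdot 2^{2\sqrt{\log U}}$, and the competitive ratio is $\Omega(2^{-2\sqrt{\log U}})$ for sufficiently large $U$.

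The main point needing care is the Lemma~\ref{lem:mapping} bound in the top case $j=k$, where $a_{k+1}$ is not defined. There I would redo the sum directly. A point of type $i<k$ has weight less than $a_{i+1}$, and type-$k$ points have weight exactly $U = a_k$, so the sum is still at most $r a_k 2^{3}\cdot 2^{k-k}\cdot O(1)$ relative to $a_k$ times $2^{k}$. This keeps the same per-segment bound $O(r2^{k}) a_j$. A second point to check is that the mapping is well-defined: each arriving point's region has a most recent boundary segment, thanks to the imaginary segment. It is also worth noting that points which are later matched are not counted in the unmatched charge. Since Lemmas~\ref{lem:lowerIndx} and~\ref{lem:upperIndx} already concern points unmatched at the end, this causes no problem.
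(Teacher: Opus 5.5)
Your proposal is correct and follows essentially the same route as the paper. You charge every unmatched point to a segment via the mapping, combine Lemma~\ref{lem:mapping} with the fact that a type-$j$ segment carries matched weight at least $a_j$ to get a per-segment ratio of order $1/(r2^{k})$, and then bound $r\le 2^{\sqrt{\log U}}\le 2^{k}$ and $2^k\le 2\cdot 2^{\sqrt{\log U}}$. Your explicit handling of the imaginary segment, absorbing its charged weight into the $U$-dependent additive constant, is something the paper leaves implicit (its argument effectively credits the phantom weight $2U$, which amounts to the same additive constant), so it clarifies rather than departs from the paper's proof.
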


\begin{proof}
For every matched pair $\overline{pq}$ by \WAM consider the set of points formed by $p$, $q$, and the unmatched points mapped to them. By Lemma~\ref{lem:mapping}, if $\overline{pq}$ has type $j$, the ratio of the weight of the matched pair over all the points in this set is at least $\frac{a_j}{2a_j+a_{j+1}2^{k-j+3}}\geq \frac{1}{r2^{k+4}}$. 

With the values from Section~\ref{ssec:point_class} of
$k = \lceil \sqrt{\log U} \rceil$ and $r = U^{1/k}$, it follows that
$r = 2^{(\log U) / k} \le 2^k$.
Thus, since the algorithm \WAM guarantees that every unmatched point is mapped to some matched pair, the competitive ratio of \WAM is at least $2^{-(2k+4)}$.
\end{proof}

\section{Randomized Algorithms} \label{sec:randomized}
Having treated the deterministic case, we investigate how much better the expected competitive ratios can become when using randomization.

\subsection{Negative Result}

By Yao's minimax principle~\cite{Y77p}, to show that the competitive ratio of randomized algorithms is at most $\gamma$, it suffices to present a randomized input on which every deterministic algorithm has a competitive ratio at most $\gamma$. We will create a randomized unweighted input similar to what Lavasani and Pankratov~\cite{lavasani2023} used for the advice model, as follows.

\begin{figure}
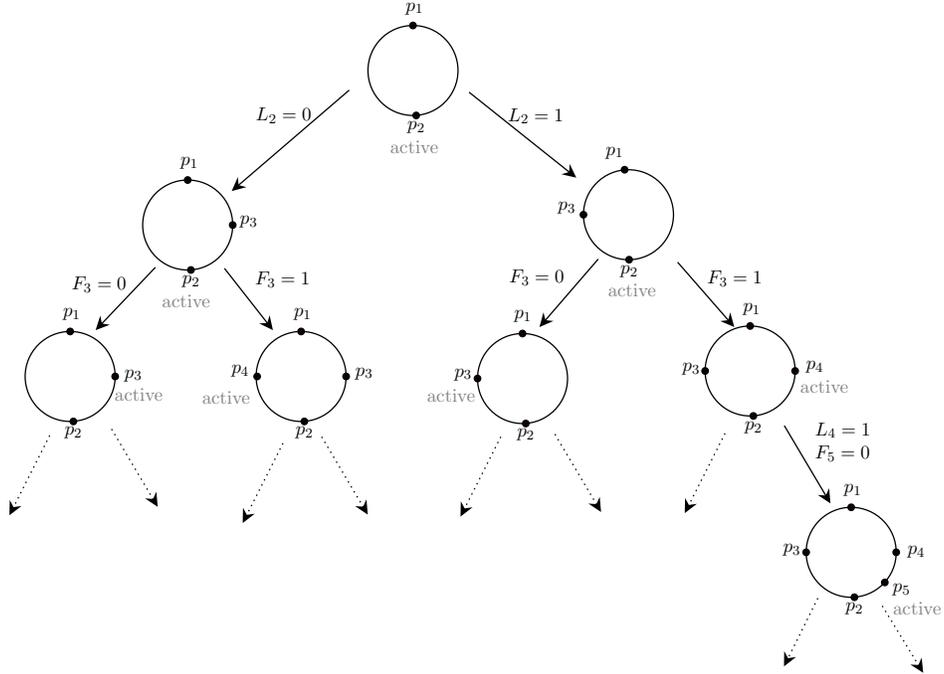

\centering
\scalebox{.65}{\figTreeLowerBoundUpdated}
    \caption{An illustration of the adversary's construction of the randomized input in the first few steps, guided by the random variables $L_i$ and $F_i$.}
    \label{fig:thm8_randomizedinput}
\end{figure}

The input points all lie on a circle; see an illustration in Figure~\ref{fig:thm8_randomizedinput} for the first few steps of the adversary's construction.  For a point $p$ on the circle, let the left and the right arcs of $p$ be the clockwise and counter-clockwise arcs that are bounded by $p$. Initially, an adversary gives $p_1$ and $p_2$ on two arbitrary antipodals of the circle, creating two arcs. The adversary makes $p_2$ the current \emph{active} point. For the rest of the points, the adversary uses two sequences $L_1,\ldots, L_{2n}$ and $F_1,\ldots,F_{2n}$ of independent identically distributed Bernoulli random indicator variables with parameter $1/2$ to guide the decision. If $p_i$ is the current active point, $L_i$ determines the position of $p_{i+1}$. If $L_i$ is $1$, the adversary gives $p_{i+1}$ in the middle of the left arc of $p_i$, the arc formed by $p_i$ and the closest point on its left. If $L_i$ is $0$, the adversary gives it in the middle of the right arc of $p_i$. Given $p_i$ is an active point, if $F_{i+1}$ is $1$, the point $p_{i+1}$ becomes, what we will call, a \emph{skipped} point.  In this case, the adversary makes $p_{i+2}$ the next active point and gives it in the middle of the other arc of $p_i$ (for example, if $p_{i+1}$ is on the left arc of $p_i$, then $p_{i+2}$ is given on the right arc of $p_i$). If $F_{i+1}$ is $0$, the adversary makes $p_{i+1}$ the new active point. The adversary continues the procedure with the new active point until $2n$ points have been created.

\begin{theorem}
\label{thm:rand_lb}
    No randomized online algorithm can achieve a competitive ratio better than $16/17$ in expectation.
\end{theorem}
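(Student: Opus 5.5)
The plan is to apply Yao's minimax principle to the random input just described. I will show that every deterministic algorithm $\ALG$, run on this distribution, has expected number of unmatched points at least $\frac{1}{17}\cdot 2n - O(1)$. Since $\OPT = 2n$ on every instance (the input is unweighted and a perfect non-crossing matching always exists), this gives $\mathbb{E}[\ALG] \le \frac{16}{17}\OPT + O(1)$, which is the claim.

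\textbf{Structural facts.} Because all points lie on a circle, every match is a chord that permanently separates the two arcs it cuts off. I first record which pairs can still be matched at each step.
\begin{itemize}
\item Every future point is placed in an arc adjacent to the current active point. So once the active point has moved past a point on the ``other'' side, that point can only be matched to later points whose chords to it cross no existing chord.
\item When $p_{i+1}$ is skipped, $p_{i+2}$ is placed on the other side of $p_i$. From then on, $p_i$ (or a chord through $p_i$'s neighbourhood) lies between $p_{i+1}$ and everything that comes later.
\end{itemize}
From these I will extract a local rule. At the moment $p_{i+1}$ arrives next to the active point $p_i$, $\ALG$ has to commit without seeing $F_{i+1}$ and $L_{i+1}$. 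The intended dichotomy is:
\begin{itemize}
\item If $\ALG$ matches $p_{i+1}$ to $p_i$ and $p_{i+1}$ turns out to be the new active point, the resulting chord isolates a point on one side. Which side is decided by the next fresh coin $L_{i+1}$.
\item If $\ALG$ leaves $p_{i+1}$ unmatched (or matches it elsewhere) and $p_{i+1}$ turns out to be skipped, then $p_{i+1}$ is separated from all future points once $p_i$ gets matched later.
\end{itemize}
Either way, a fair coin decides whether $\ALG$'s choice forces a loss. This is the mechanism behind the advice lower bound of Lavasani and Pankratov~\cite{lavasani2023}, and I would reuse their case analysis of local configurations.

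\textbf{Phase decomposition.} I will split the input into phases that begin whenever the active point is unmatched and its neighbourhood is ``clean''. The steps are:
\begin{enumerate}
\item Enumerate $\ALG$'s possible actions within a phase, a small finite decision tree over the bits $F$ and $L$.
\item For each branch, compute the expected number of points given in the phase and the expected number of points that become permanently unmatchable.
\item Show that, whatever $\ALG$ chooses, the ratio of expected losses to expected length is at least $1/17$. The constant $17$ should come out of this computation as a minimum over $\ALG$'s strategies, presumably from geometric-type sums over consecutive $F_i=0$ runs.
\item Since phases are separated by fresh independent coins, apply a renewal-reward argument (or linearity of expectation over phases, with an $O(1)$ boundary term for the last incomplete phase) to lift the per-phase bound to the whole sequence.
\end{enumerate}

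\textbf{Main obstacle.} The hardest step will be making the per-phase accounting rigorous. This needs two things. First, the phase boundaries must be stopping times, so that $\ALG$'s information at the start of a phase carries no advantage. Second, a point charged as lost must truly be unmatchable for the rest of the input; in particular $\ALG$ must not be able to rescue it later through some chord that avoids the separating point.

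I would first try to prove a clean invariant: ``every point that is neither active nor adjacent to the active point, and is separated from the active arcs by a chord or by a matched point, is dead.'' I would then charge losses only to such dead points. If $16/17$ does not drop out directly, I would refine the phases to span two consecutive decisions, so that the worst case over $\ALG$'s strategies is taken over a slightly larger tree.
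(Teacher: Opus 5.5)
Your proposal has a genuine gap. Everything that actually produces the constant is deferred (``should come out of this computation'', ``reuse their case analysis''), and the mechanism you guess at is not the one that works. The paper neither analyses phases nor optimises over \ALG's strategies. Its key lemma is a per-match bound that is independent of the global configuration:

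\begin{itemize}
\item When \ALG matches $p_i$ on arrival, the new chord splits the region into $R_L$ and $R_R$, containing the left and right arcs of $p_i$.
\item Depending only on which of $R_L$, $R_R$ still contain unmatched points, the event $F_i=0$ (which forces $p_i$ to be active) together with one of $F_{i+1}=1$, $L_i=1$, $L_i=0$, $F_{i+1}=0$ guarantees an isolated point.
\item Since $F_i$, $L_i$, $F_{i+1}$ are still unrevealed when \ALG decides, each match wastes a point with probability at least $1/4$.
\end{itemize}

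Consecutive matches may share the coin $F_{i+1}$, so only every second match is charged. Each pair of matches then consumes at least $4+\mathrm{Bernoulli}(1/4)$ points, i.i.d., and the renewal theorem gives $\mathbb{E}[M]\le 16n/17+2$. So $17/4=4+1/4$ is an artifact of this lossy charging, not the value of a per-phase game. Nothing in your outline pins down a per-phase loss rate at all.

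The local dichotomy you propose instead is also incorrect as stated.

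\begin{itemize}
\item It only considers \ALG matching $p_{i+1}$ to $p_i$. In fact \ALG may match the new point to any unmatched point of its region.
\item Its second branch is false. Suppose $p_{i+1}$ is skipped and \ALG later matches $p_{i+2}$ to $p_i$. The chord $\overline{p_ip_{i+2}}$ only cuts off the empty arc between $p_i$ and $p_{i+2}$. If $p_{i+3}$ then lands on the far arc of $p_{i+2}$, the chord $\overline{p_{i+1}p_{i+3}}$ crosses nothing, so $p_{i+1}$ is not lost.
\item Its first branch is too strong. After matching $p_i$ with an active $p_{i+1}$, the chord cuts off an empty cap. Nothing need be isolated if $p_{i+2}$ is placed away from the cap and becomes active. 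If the far side holds no unmatched point, a loss requires a skipped point, i.e.\ a further coin.
\end{itemize}

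Finally, a ``small finite decision tree'' per phase does not exist, since the number of unmatched points and chords in the current region is unbounded. The idea you are missing is that the emptiness pattern of $R_L,R_R$ is the only configuration information needed. This reduces every match, wherever it occurs, to four cases with a uniform $1/4$ loss probability over unrevealed coins. You then also need to handle the dependence between consecutive matches, which the paper does by charging only every other one.
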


\begin{proof}
As mentioned at the beginning of this subsection, by Yao's minimax principle~\cite{Y77p}, it suffices to study the behavior of deterministic algorithms on the randomized input constructed as described above, illustrated in Figure \ref{fig:thm8_randomizedinput}. Now we fix a deterministic algorithm, $\ALG$. Segments of points matched by $\ALG$ divide the circle into convex regions.  If an unmatched point is in a region where no new points arrive, it cannot be matched anymore and we call it an \emph{isolated} point. Given that \ALG matches $p_i$ upon its arrival, let $X_i$ be the indicator random variable that $p_i$ is an active point and matching it causes at least one point (either be a previous point or the point $p_{i+1}$) to become isolated. Our aim is to prove a lower bound on  the number of isolated points, or equivalently, an upper bound on the number of segments.

We express $X_i$ in terms of other random variables. Let $A_i$ be the indicator random variable that $p_i$ becomes an active point. For $i\geq 3$, $p_i$ is a skipped point if and only if $p_{i-1}$ is an active point and $F_i$ is $1$. Thus, 
\begin{equation}	\label{eq:bound_Ai}
	A_i = 1-A_{i-1}F_i \ge 1- F_i. 
\end{equation}
Suppose $p_i$ becomes an active point that $\ALG$ matches upon its arrival in a convex region $R$, splitting $R$ into $R_L$ and $R_R$, containing the left and right arcs of $p_i$, respectively. We say $R_L$ (or $R_R$) is empty if it does not contain any unmatched point. We have the following four cases:

\begin{itemize}
    \item $R_L$ and $R_R$ are both empty: $p_{i+1}$ becomes isolated if it is a skipped point. Hence, $X_i \ge A_i F_{i+1}$;

    \item $R_L$ and $R_R$ are both non-empty: suppose $p_{i+1}$ lies in $R_L$ and becomes an active point, then unmatched points in $R_R$ become isolated. The case when $p_{i+1}$ lies in $R_R$ is similar. Hence, $X_i \ge A_i (1-F_{i+1})$;

    \item $R_L$ is empty and $R_R$ is non-empty: If $p_{i+1}$ arrives on the left arc of $p_i$ ($L_i=1$), then at least one isolated point will be guaranteed. Indeed, if $p_{i+1}$ is a skipped point, it becomes isolated because $R_L$ is empty, and if it is the new active point, unmatched points in $R_R$ become isolated. Hence, $X_i \ge A_i L_i$; 

    \item $R_L$ is non-empty and $R_R$ is empty: Symmetrically, $X_i \ge A_i (1-L_i)$. 
\end{itemize}
If $i = 2n$, there is no $p_{i+1}$, and matching $p_i$ makes points isolated if $R_L$ or $R_R$ are not empty. Since we are interested in the asymptotic competitive ratio, we ignore this boundary case. To sum up, we get

\begin{equation}    \label{eq:def_Xi}
    X_i \ge 
    \begin{cases}
        A_iF_{i+1}, & \text{ if $R_L$ and $R_R$ are empty } \\
        A_iL_{i}, & \text{ if $R_L$ is empty and $R_R$ is non-empty  } \\
        A_i(1-L_i), & \text{ if $R_L$ is non-empty and $R_R$ is empty   }\\
        A_i(1-F_{i+1}), & \text{ if $R_L$ and $R_R$ are non-empty }
    \end{cases}
\end{equation}
Define $Y_i$ similarly to $X_i$ in \eqref{eq:def_Xi}, but replace $A_i$ by $1-F_i$, i.e., 
\begin{equation}    \label{eq:def_Yi}
    Y_i = 
    \begin{cases}
        (1-F_i) F_{i+1}, & \text{ if $R_L$ and $R_R$ are empty } \\
        (1-F_i) L_{i}, & \text{ if $R_L$ is empty and $R_R$ is non-empty  } \\
        (1-F_i) (1-L_i), & \text{ if $R_L$ is non-empty and $R_R$ is empty   }\\
        (1-F_i) (1-F_{i+1}), & \text{ if $R_L$ and $R_R$ are non-empty }
    \end{cases}
\end{equation}
Then, \eqref{eq:bound_Ai} implies $Y_i \le X_i$.

Now, let $M$ be a random variable that denotes the size of the matching made by $\ALG$, and for each $1\leq i \leq M$, let $T_i$ be the step number in which $\ALG$ makes the $i^{\text{th}}$ match. Note that $T_i$ and $M$ are completely determined by the sequences $L_i$ and $F_i$ (for $1 \le i \le 2n$) and the algorithm $\ALG$.
Thus, the algorithm is guaranteed to have at least $\sum_{i = 1}^M X_{T_i}$ unmatched points. Since each segment uses two points, we have 
	$\sum_{i = 1}^M (X_{T_i} + 2) \le 2n$.
Hence,  by $Y_i \le X_i$, we have, 
\begin{equation}    \label{eq:half}
    2n \ge \sum_{i = 1}^M (Y_{T_i} + 2) 
    \ge \sum_{i = 1}^{\lfloor \frac{M}{2}\rfloor} (Y_{T_{2i-1}} + Y_{T_{2i}} + 4)
    \ge \sum_{i = 1}^{\lfloor \frac{M}{2}\rfloor} ( Y_{T_{2i}} + 4).
\end{equation}
Define another auxiliary random variable sequence $Z_1, Z_2,\ldots$ as follows:
\[
    Z_i = 
    \begin{cases}
        Y_{T_{2i}} + 4, &\quad 1 \le i \le \lfloor \frac{M}{2}\rfloor, \\
        Y' + 4, &\quad i > \lfloor \frac{M}{2}\rfloor,
    \end{cases}
\]
where $Y'$ is an i.i.d.\ Bernoulli random variable with parameter $1/4$. By the definition of $Y_i$, because $Y_{T_{2i}}$ and $Y_{T_{2(i+1)}}$ depend on disjoint sets of variables,  $Y_{T_{2i}}$ are i.i.d.\ Bernoulli random variables with parameter $1/4$. Hence, the sequence $Z_i$ for all $i\ge 1$ is i.i.d.\ with expectation $17/4$. Let the random variable $M'$ be the maximum integer such that $\sum_{i=1}^{M'} Z_i \le 2n$. Note that by definition of $Z_i$ and \eqref{eq:half}, we have $\sum_{i=1}^{\lfloor \frac{M}{2}\rfloor} Z_i \le 2n$, and, hence, $M' \ge \lfloor \frac{M}{2}\rfloor$. Since the $Z_i$'s are i.i.d.\ with expectation $17/4$, by the renewal theorem, we have $\Ex[M'] = 2n/(17/4) = 8n/17$. Hence, 
    $8n/17 \ge \Ex[\lfloor \frac{M}{2}\rfloor] \ge \Ex[M]/2 - 1$,
this implies
    $\Ex[M] \le 16n/17 + 2$. 
As a result, in expectation, the deterministic algorithm $\ALG$ can match $2\Ex[M] \le 2(16n/17 + 2)$ points. Therefore, the expected competitive ratio is 
$\frac{2\Ex[M]}{2n}$, the infimum of which, for $n\to \infty$, is~$16/17$.
\end{proof}

\subsection{Positive Result: Tree-Guided-Matching Algorithm}

We propose a randomized algorithm called ``Tree-Guided-Matching'' (\TGM) that has the following uniform guarantee, regardless of the weights of the points: each point appears in a matching with probability at least~$1/3$. The algorithm \TGM uses a binary tree to guide its matching decisions. The structure of the binary tree is determined by the relative positions of the input points, with each node of the tree corresponding to such a point. Intuitively, the binary tree, as it grows, gives an online refinement of the partition of the plane into convex regions, such that for each region, there is some online point \emph{responsible} for it; see Figure~\ref{fig:TGM}.
We emphasize that the regions and the tree that is created to represent them do not indicate any matching. Their purpose is to ensure that if matchings are made, then they are non-crossing, and it also indicates which points we will try to match to with some probability.

\begin{figure}[ht!]
\centering
\includegraphics[scale=0.15]{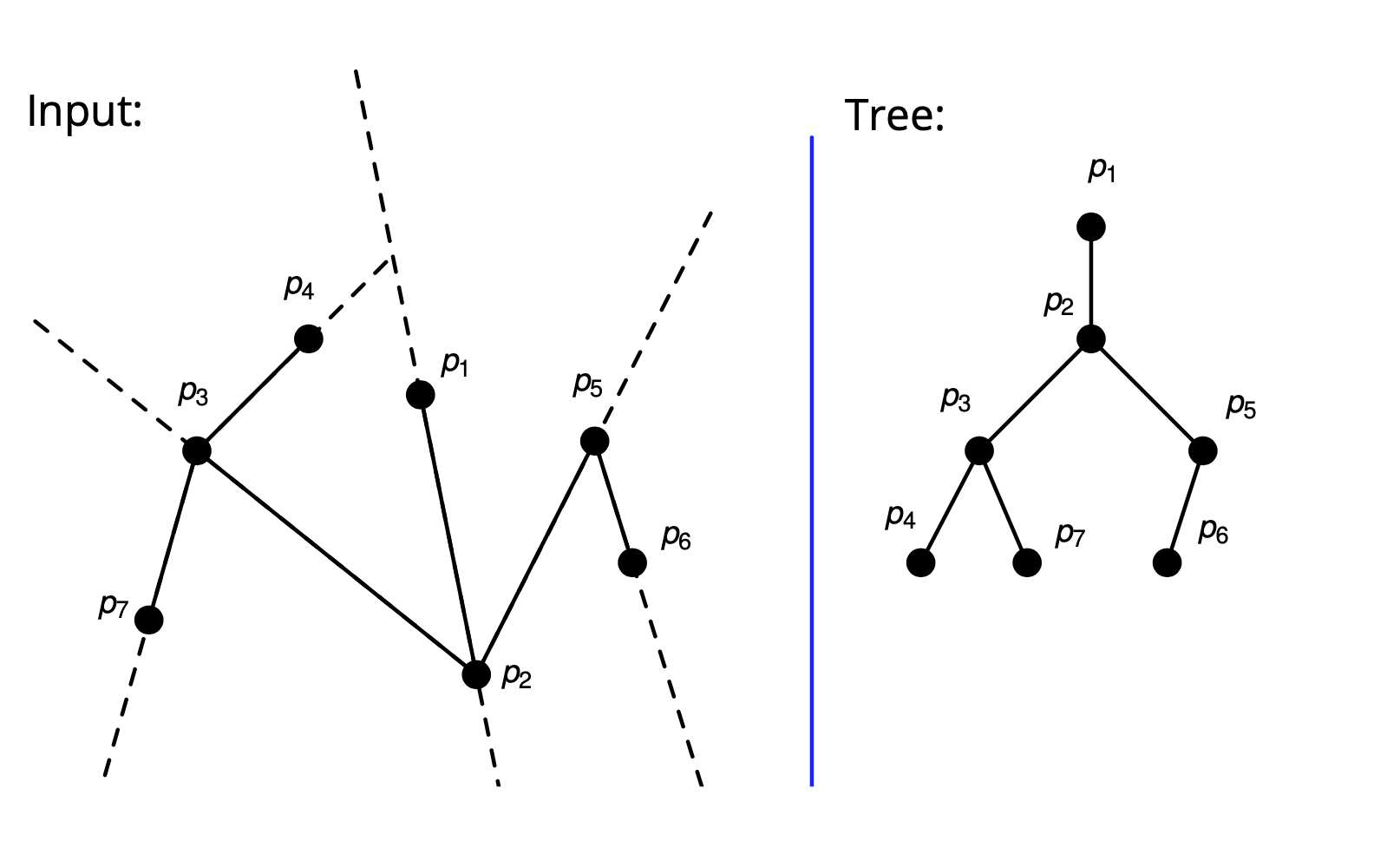}
    \caption{On the left is the input and how \TGM divides and partitions the plane, and on the right is the tree it creates from the input. Based on this tree, it matches nodes with their parents randomly, such that every node upon its arrival gets matched with a probability of $1/3$.}
    \label{fig:TGM}
\end{figure}

Initially, set $p_1$ as the root of the tree and $p_2$ as the child of $p_1$. Abusing notation slightly, we also use $\overline{pq}$ to denote the straight line determined by points $p$ and $q$. Let $R_1$ and $R_2$ denote the two regions corresponding to the half-spaces created by $\overline{p_1p_2}$.
Again, we emphasize that this defines the start of the tree and the regions
and does not indicate that $p_1$ and $p_2$ are matched.
The decision as to whether or not that will happen id discussed below.
Let $p_2$ be responsible for both $R_1$ and $R_2$. In general, when $p_i$ arrives into a region $R$ for which $p_j$ is responsible (of course, $j < i$), make $p_i$ a child of $p_j$ in the binary tree. The line $\overline{p_i p_j}$ divides the region $R$ into two sub-regions $R'$ and $R''$,  let $p_i$ be responsible for both of them, and at this point the responsibility of $p_j$ on $R$ is lost as region $R$ has been refined to $R'$ and $R''$. Note that this implies that every node of the tree has at most two children. 
Next, we describe how \TGM chooses to match points. At the beginning, \TGM matches $p_2$ with $p_1$ with probability $1/3$. After that, upon the arrival of $p_i$, let $p_j$ be its parent in the tree. If $p_j$ is unmatched and $p_i$ is its first child,  match $p_i$ to $p_j$ with probability $1/2$. If $p_j$ is unmatched and $p_i$ is its second child,  match $p_i$ to $p_j$ deterministically. Note that \TGM only tries to match an online point with its parent in the tree.

\begin{theorem}
\label{thm:rand_ub}
    Every point, regardless of its weight, is matched by the randomized algorithm \TGM with probability at least $1/3$. Hence, \TGM 
    achieves a strict competitive ratio of at least $1/3$. 
\end{theorem}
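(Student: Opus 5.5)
Two things need proving: that \TGM always produces a valid non-crossing matching, and that every point is matched with probability at least $1/3$. Linearity of expectation then gives $\Ex[\TGM(I)] \ge \frac13 \sum_i w(p_i) = \frac13 W = \frac13\OPT(I)$ with no additive constant, which is the strict competitive ratio claimed in Theorem~\ref{thm:rand_ub}.

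\emph{Feasibility.} I would prove, by induction on arrivals, that the regions form a convex partition of the plane, and that every tree edge $\overline{p_ip_j}$ lies inside the region $R$ that it splits. Since $p_i$ and $p_j$ both lie in $R$ and $R$ is convex, the segment $\overline{p_ip_j}$ is contained in $R$. After the split, the segment lies on the common boundary of $R'$ and $R''$. Every later tree edge is drawn inside a region that is a subset of either $R'$ or $R''$, so it can at most touch this boundary and never crosses it. Hence all tree edges are pairwise non-crossing, and \TGM only matches along tree edges, so its matching is non-crossing.

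\emph{The probabilities.} I would first establish the invariant that every unmatched point $p_j$ is matched to its parent with probability exactly $1/3$, or possibly more for $p_1$. Over the randomness of its own arrival step, $p_j$ is then unmatched with probability $2/3$. A point can be matched in three ways: to its parent on arrival, to its first child, or to its second child. I would handle each point $p_j$ in turn:
\begin{itemize}
\item[(a)] If $p_j = p_2$, it is matched to $p_1$ with probability $1/3$ by definition.
\item[(b)] If $p_j$ with $j\ge3$ is the first child of its parent $p_\ell$, it is matched with probability $\Pr[p_\ell \text{ unmatched}]\cdot\frac12$. By induction $\Pr[p_\ell \text{ unmatched}] = \frac23$, since $p_\ell$ has no earlier child that could have taken it. This gives $\frac23\cdot\frac12 = \frac13$.
\item[(c)] If $p_j$ is the second child of $p_\ell$, it is matched exactly when $p_\ell$ survived both its own arrival and its first child. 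That probability is $\frac23\cdot\frac12 = \frac13$.
\end{itemize}
Parent $p_\ell = p_1$ needs separate treatment. Its first child is $p_2$, matched with probability $1/3$, and then $p_1$ may receive a second child, say $p_3$ arriving in a region of $p_1$ (if responsibility is set up that way). Under the stated rules that child is matched whenever $p_1$ is unmatched, which happens with probability $2/3$ and so is at least $1/3$ as required. I would have to check the responsibility rules to confirm which points can become children of $p_1$.

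\emph{Independence.} The main obstacle is the independence needed in (b) and (c): the event ``$p_\ell$ unmatched'' must be independent of the coin flipped for the current child. This holds because the tree structure is determined by the input alone, not by the coins. Each coin is fresh, and ``$p_\ell$ unmatched at time $j$'' depends only on $p_\ell$'s own coin and the coin of its first child. So the probabilities multiply.

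Finally, every point is matched with probability at least $1/3$. The root $p_1$ is matched with probability at least $1/3$ through $p_2$ alone. Summing with weights completes the proof.
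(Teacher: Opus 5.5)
Your proposal is correct and follows the paper's proof. It uses the same convexity argument for non-crossing and the same induction: each $p_j$ with $j\ge 2$ is matched to its parent upon arrival with probability exactly $1/3$. The only cosmetic difference is that for a second child you compute $\frac{2}{3}\cdot\frac{1}{2}$, where the paper computes $1-\frac{1}{3}-\frac{1}{3}$.

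Two small points need tidying.

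\emph{The root.} The check you left open about $p_1$ resolves at once. $p_2$ is made responsible for both half-planes, so $p_1$ never becomes responsible for a region and has only one child. This matters because your fallback would actually fail. A second child of $p_1$ matched with probability $2/3$ would be unmatched with probability $1/3$, so its own first child would be matched with probability only $\frac{1}{3}\cdot\frac{1}{2}=\frac{1}{6}$. The induction therefore genuinely needs ``exactly $1/3$'' for non-root points, not ``at least''.

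\emph{Independence.} All you need is that the coin flipped for $p_j$ is fresh, i.e., independent of all earlier randomness. Your stated reason is inaccurate: whether $p_\ell$ is still unmatched at time $j$ also depends on its ancestors' coins, not only on its own coin and its first child's.
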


\begin{proof}
    Note that since \TGM only matches a child to its parent in the binary tree, the matching is non-crossing. Indeed, by our construction of the tree, recalling that points are in general position, every child is a point \emph{inside} a convex region for which its parent is responsible, and its parent lies on the boundary of that region. Hence, the line segment formed by a match does not cross any existing line segment. 

    Next, we show the claimed performance of \TGM. 
    By the definition of \TGM, $p_1$ is matched (with $p_2$) with probability $1/3$. 
    We will show that every $p_i$, $i\ge 2$, \emph{upon its arrival} gets matched to its parent with probability exactly $1/3$, which implies the claim. To see this, we proceed inductively. The base case is true for $p_2$. Let $p$ be the currently arrived point, and let $q$ be its parent. 
    We consider two cases. 
    \begin{itemize}
        \item If $p$ is the first child of $q$, then by the induction hypothesis, $q$ is unmatched at this moment with probability $2/3$. Hence, according to \TGM, $p$ is matched (with $q$) with probability $(2/3) \cdot (1/2) = 1/3$.

        \item If $p$ is the second child of $q$, then $q$ is unmatched at this moment with probability $1-1/3 -1/3 = 1/3$. By \TGM, $p$ is matched (with $q$) with probability $(1/3) \cdot 1 = 1/3$. 
    \end{itemize}
\end{proof}

\section{Revocable Acceptances}
\label{sec:revoke}

In this section, we consider the revocable setting. When a new point $p$ arrives, an algorithm has the option of removing one of the existing edges from the matching prior to processing $p$. The decision to remove an existing edge is irrevocable. The benefit of making this decision is that the end-points of the removed edge, along with possible points on the other side of the edge (though our positive result does not use this possibility), become available candidates to be matched with $p$, provided the non-crossing constraint is maintained.

\subsection{Negative Result}

Bose et al.~\cite{bose2020non} showed that a deterministic greedy algorithm without revoking can achieve a competitive ratio of $2/3$ in the unweighted version. In this section, we prove that, even allowing revoking, this is optimal.

\begin{theorem}
\label{thm:revoke_negative}
Even in the unweighted version,
    no deterministic algorithm with revoking can achieve a competitive ratio better than $2/3$.
\end{theorem}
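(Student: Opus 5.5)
The plan is an adaptive adversary that places all points on a circle, in the spirit of the deterministic lower bounds above. Every edge is then a chord, and a chord separates the arcs on its two sides. The point is to show that revoking buys the algorithm nothing beyond what the non-revoking adversary already allows.

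\textbf{Bookkeeping fact about revocation.} In one step the algorithm removes at most one edge and then adds at most one edge, and that edge must contain the new point. So the number of edges in the matching increases by at most one per step, and it increases only in a step with no revocation. A step that revokes $\overline{ab}$ and matches the new point $p$ to $a$ (or $b$, or another point) leaves the edge count unchanged. I will therefore measure $\ALG$ by its final edge count. The goal is to charge at least one permanently unmatched point to every two matched points, up to an additive constant.

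\textbf{Adversary.} It keeps an \emph{active arc} between two consecutive points and always places the next point at the midpoint of that arc. It works in small gadgets: it presents a point $x$, then a point $y$ adjacent to $x$.
\begin{itemize}
\item If $\ALG$ leaves both unmatched, the adversary keeps presenting points in the arc on the far side of the pair, so that $x,y$ remain reachable. This continues until a match is made, or until $\ALG$ has accumulated a long run of unmatched points, which already costs it a constant fraction.
\item If $\ALG$ matches $y$ to $x$ (or to some earlier neighbour) without revoking, the chord cuts off an arc. The adversary places one point in the side containing no unmatched points. That point is isolated unless $\ALG$ revokes the new chord.
\item If $\ALG$ does revoke, it trades $\overline{xy}$ for an edge containing the new point $z$. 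One of $x,y$ is then freed, but it lies in an arc that the new chord $\overline{zx}$ or $\overline{zy}$ now separates from everything to come. The adversary moves the active arc away from it. By the bookkeeping fact, the edge count did not grow.
\end{itemize}
In every branch, each new edge created without a revocation is accompanied by one point that ends up isolated: it lies in an empty region, and the adversary never again places points there. This gives the ratio $2/3$.

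\textbf{Charging and main obstacle.} I would make the argument precise with a charging scheme. Each isolated point is mapped to the chord whose creation, or whose survival, isolated it. I then check two things: every chord receives at least one isolated point, apart from $O(1)$ boundary chords; and a revocation can only transfer the charge from the revoked chord to the replacing chord, never destroy it. The main obstacle is the revocation that \emph{un-isolates} a point. Removing a chord merges two regions and may make an earlier isolated point matchable again. The fix I would try first is to ensure that each isolated point is separated from the active arc by at least two chords, or sits behind the specific chord it is charged to. Then a single revocation per step can free it only by consuming that chord, and the freed point can only be matched to the current new point, so the edge count stays the same. Summing the charges, the number of matched points is at most $2/3$ of all points plus a constant, which proves Theorem~\ref{thm:revoke_negative}.
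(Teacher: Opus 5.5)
Your overall strategy is the paper's: points on a circle, a single active region, one isolated unmatched point charged to each chord, and revocations moving charges around. However, the step you flag as the main obstacle is where the whole proof lives, and the fix you sketch does not close it.

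Requiring that a charged point ``sits behind the specific chord it is charged to'' is not enough by itself. Suppose a non-active region contains two charged points $c,c'$, charged to chords $e,e'$ that both border it, and $e$ also borders the active region. Then revoking $e$ makes $c'$ reachable while $e'$ survives. \ALG can even match the new point to $c'$ with no change in the edge count, so $e'$ has lost its charge without being consumed. (A charged point lying in the active region is lost even more easily.) So your claim that one revocation can free a charged point only by consuming its own chord needs an additional invariant. The remark that ``the edge count stays the same'' is beside the point: what must be re-established after a revocation is that the \emph{new} chord owns an isolated point. The ``two chords'' alternative cannot be maintained, since a freshly charged point is separated from the new active region by the new chord alone.

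The missing ingredient is the paper's invariant. Each matched pair has a unique associated unmatched point, lying outside the active region, in a region bordered by that pair's chord, and no region contains two associated points. Then revoking a chord adjacent to the active region merges the active region exactly with the region holding that chord's associated point, so the merged region contains no associated points. If the new point $p$ is matched, at least one endpoint of the revoked edge remains unmatched and was associated with nothing. It becomes the associated point of the new edge, and the adversary activates the side of the new chord away from it.

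In the non-revoking case, the paper charges an existing unmatched point on the side with more of them and activates the other side, adding a point only when both sides are empty. Your rule of always adding a point on ``the'' empty side is undefined when both sides are non-empty. Your claim that this point stays isolated ``unless \ALG revokes the new chord'' is also false: \ALG can revoke an older chord bordering that side and match the point to one of its endpoints. So the response to any such added point (yours, or the paper's in the both-empty case) has to be handled as a further phase under the invariant.

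Finally, you need no cut-off for long runs of unmatched points, and a mere constant fraction would not yield $2/3$ anyway. Unmatched points still in the active region simply count as unmatched at the end.
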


\begin{proof}
    Fix a deterministic algorithm \ALG, an arbitrarily large $n$, and a circle in the plane. The adversary gives at least $2n$ points, all of weight~$1$, on the circle, one by one, and let \ALG match them into pairs. Thus, all matches are chords on the circle. We maintain the invariants that there is always one active region of the circle, and that for each matched pair, there is always exactly one unique unmatched point associated with it. Such a point is not in the active region and in its region, there are no other points that are associated with a matched pair. Additionally, the chord of matched pair it is associated with borders its region. There may be additional unmatched points that are \emph{not} associated with a matched pair.

    Initially, the entire circle is the active region. A phase consists of the adversary presenting points on the circle, in the active region, until \ALG either matches a pair or revokes a matching (or the total of $2n$ points have been given).

We first consider the case, where there is no revoking.
Thus, the current point, $p$, is simply matched to a point $q$ on the circle. The chord $\overline{pq}$ divides the active region into two sub-regions, $R_1$ and $R_2$.
Without loss of generality, assume that $R_1$ contains at least as many unmatched points as $R_2$.
In case neither region has any points, we give a point $p'$ to $R_1$.
Now there is some unmatched point in $R_1$, and one of those is associated with the matched pair,
$R_2$ becomes the active region, the phase ends, and the invariants clearly hold.
    
Now assume that \ALG revokes a matching.
Removing the one match removes a chord of the circle, joining two regions into a new convex region.
Note that one of the regions was the active one, so in that region there are no unmatched points associated with a matched pair.
The chord only borders two regions, so the other region must
contain the associated point.
Thus, after revoking, the joint region has no points associated with a matched pair.

If $p$ is not matched, this joint region becomes the active region, and the phase ends.

If $p$ is matched, it may be to one of the points from the revoked match.
However, at least one of the points, $q$, from the revoked match is now unmatched since $p$ is only matched to one point.
Since $q$ was just part of a matching, it cannot have been an unmatched point
associated with some matching, so we can associate $q$ with the new matching
just created.
The sub-region created by the match that does not contain $q$ becomes the active region, and the current phase ends.

    Inductively, the invariants hold after each phase, and the unmatched point associated with each matched pair ensures that no more than $2/3$ of the points are matched. Although the number of points may be odd, this gives an asymptotic lower bound of $2/3$ on the competitive ratio.
\end{proof}

\subsection{Positive Result: Big-Improvement-Match}
We present a deterministic algorithm with revoking, called ``Big-Improvement-Match'' (\BIM). This algorithm has a strict competitive ratio of approximately $\bimrho$, even when weights of points are unrestricted. This shows that while revoking does not improve the competitive ratio in the unweighted version, it provides us with an algorithm with a constant competitive ratio, which is unattainable for a deterministic algorithm without revoking.

In the following, we abuse notation slightly, using $\overline{pq}$ to denote the straight line determined by two points $p$ and $q$.
\BIM maintains a partitioning of the Euclidean space into regions. Each region in the partition is assigned an edge from the current matching to be responsible for that region. Each edge can be responsible for up to two regions.
\BIM starts out by matching the first two points, $p_1$ and $p_2$ regardless of their weights, dividing the plane into two half-planes by $\overline{p_1p_2}$.
 \BIM then assigns $\overline{p_1p_2}$ to be responsible for the two half-plane regions. Next, consider a new point $p_i$ (for $i \ge 3$) that arrives in an existing region, $R$. Suppose that $\overline{p_j p_{j'}}$ is the responsible edge for $R$. If there is at least one unmatched point in $R$, \BIM matches $p_i$ with an unmatched point $p_k$ in $R$ with the highest weight. Then $\overline{p_j p_{j'}}$ is no longer responsible for $R$, and the region $R$ is divided into two new regions by $\overline{p_ip_k}$. The responsibility for both new regions is assigned to $\overline{p_i p_k}$. If $p_i$ is the only point in $R$, then \BIM considers revoking the matching $(p_j, p_{j'})$ as follows. Without loss of generality, assume that $w(p_j)\leq w(p_{j'})$.
 We introduce a parameter, $r$, that is going to be chosen later so as to optimize the competitive ratio.
 If $w(p_i)<r w(p_{j'})$,
then \BIM leaves $p_i$ unmatched. Otherwise, \BIM removes the matching $(p_j,p_{j'})$ and matches $p_i$ with $p_{j'}$. If $R$ is the only region that $\overline{p_jp_{j'}}$ was responsible for when $p_i$ arrived, then $R$ is divided into two regions by $\overline{p_ip_{j'}}$, and $\overline{p_i p_{j'}}$ becomes responsible for the two new regions. (The regions on the other side of $\overline{p_j p_{j'}}$ from $p_i$ keep their boundaries, even though $(p_j,p_{j'})$ is no longer in the matching.) Otherwise $\overline{p_j p_{j'}}$ was responsible for $R'$ in addition to $R$ when $p_i$ arrived. In this case, after removal of the match $(p_j, p_{j'})$, regions $R$ and $R'$ are merged to give region $R''=R\cup R'$, and $R''$ is divided by $\overline{p_ip_{j'}}$ into two regions, and \BIM makes $\overline{p_i p_{j'}}$ responsible for both new regions.

\begin{proposition}
\label{prop}
  The following observations concerning \BIM hold:
\begin{enumerate}
  \item \label{obs:responsible} All responsible edges are defined by two currently matched points.
\item \label{obs:atmosttworegions} Each edge is responsible for at most two regions.
\item \label{obs:convex} All regions are convex.
  \item \label{obs:replaced} When a matched edge $(p_j,p_{j'})$ is replaced due to the arrival of a point $p_i$ in region $R$, then edge $(p_i,p_{j'})$ lies entirely in $R$.
\end{enumerate}
\end{proposition}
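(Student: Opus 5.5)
The plan is to prove all four items simultaneously by induction on the number of arrived points. The inductive invariant is the conjunction of items~\ref{obs:responsible}--\ref{obs:convex}, strengthened by one more property: every region $R$ has its responsible edge $\overline{p_jp_{j'}}$ lying on the boundary of $R$, so that $R$ lies in one closed half-plane of the line $\overline{p_jp_{j'}}$. Item~\ref{obs:replaced} will fall out of the revocation step.

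\textbf{Base case.} After $p_1,p_2$ are matched, the two open half-planes are convex. Both are assigned to $\overline{p_1p_2}$, which is a current matched edge and bounds both. So the base case is immediate.

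\textbf{Inductive step.} Let $p_i$ arrive in region $R$ with responsible edge $e=\overline{p_jp_{j'}}$. There are three cases.
\begin{enumerate}
\item \emph{$R$ contains another unmatched point $p_k$.} $R$ is split by the line $\overline{p_ip_k}$. Intersecting a convex set with a half-plane gives two convex regions, and both are bounded by the new matched edge. This edge becomes responsible for exactly these two regions. Edge $e$ loses $R$, so its count only decreases, and nothing is revoked.
\item \emph{$p_i$ is alone in $R$ and is not matched.} Nothing changes.
\item \emph{Revocation.} Edge $e$ is removed and $p_i$ is matched to $p_{j'}$. Every region for which $e$ was responsible is either $R$ or a region $R'$ it was also responsible for, and all of these are merged or re-split. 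Hence no region remains assigned to the removed edge, and item~\ref{obs:responsible} survives. The new edge gets at most two regions, giving item~\ref{obs:atmosttworegions}.
\end{enumerate}

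\textbf{Convexity under revocation.} This is the main obstacle. If $e$ was responsible only for $R$, the split of $R$ by $\overline{p_ip_{j'}}$ is convex as in case~1. If $e$ was responsible for $R$ and $R'$, I must show $R''=R\cup R'$ is convex. By the strengthened invariant, $R$ and $R'$ lie on opposite sides of the line through $e$ and both have the segment $e$ on their boundary. The delicate point is to show that they actually share a common side, i.e., that $R$ and $R'$ are the two pieces of one convex region $C$ cut by the extension of $e$ within $C$. The invariant I will carry for this is a ``pairing'' property: the two regions assigned to one edge are always the two halves obtained when that edge's extension was cut through a convex region, and neither has been refined since. Any refinement of one of them removes it from $e$'s responsibility, leaving $e$ with a single region. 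With this invariant, $R\cup R'=C$ is convex, and cutting $C$ by a line yields convex pieces.

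\textbf{Item~\ref{obs:replaced}.} Since $R$ is convex and $p_{j'}$ lies on its boundary (being an endpoint of $e$, which bounds $R$), and $p_i$ lies in $R$, the segment $\overline{p_ip_{j'}}$ lies in the closure of $R$, and by general position in $R$ itself apart from its endpoint. This is exactly item~\ref{obs:replaced}.

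I expect the genuine work to lie in carefully formulating and maintaining the pairing invariant. In particular, I need that the endpoints of $e$ lie on the common boundary of $R$ and $R'$, and I will check this through every case above.
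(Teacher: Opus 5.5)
Your proposal is correct and follows essentially the same route as the paper. Your ``pairing'' invariant (an edge's regions are the two unrefined halves created when that edge was formed) is exactly the paper's remark that a revoked edge was still the last match created in each region it is responsible for, so the merged region is the original convex region; like the paper, you then get item~4 from convexity, but you also make explicit the needed fact that $p_{j'}$ lies on the boundary of $R$.
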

\begin{proof}
  (\ref{obs:responsible}) follows since an edge only becomes responsible when its endpoints become matched. When another edge becomes responsible for a region, the original edge is no longer responsible.
    (\ref{obs:atmosttworegions}) follows since the only two regions an edge is made responsible for are the two regions created when the endpoints of the edge were matched. When two points in one of the regions an edge is responsible for are matched, the edge is no longer responsible for that region, but will still be responsible for one region if it had been responsible for two up until that point.
  (\ref{obs:convex}) follows inductively, since separating two convex regions by a line segment creates two convex regions. In addition, when \BIM removes an edge, that edge was the last matching created in either of the two regions it was responsible for.
  (\ref{obs:replaced}) follows by (\ref{obs:convex}).
  \end{proof}

\begin{theorem}
\label{thm:revoke_ub}
    For the \WNM problem with revoking and arbitrary weights,
    \BIM with  $r \in (1,\sqrt{2}]$ has strict competitive ratio at least
    $\min\left\{ \frac{r^2-1}{r^3}, \frac{1}{1+2r} \right\}$.
\end{theorem}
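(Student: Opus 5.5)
We argue by a charging scheme: every point that ends unmatched, and every point that was matched at some time but later lost its partner through revoking, is charged to an edge in \BIM's final matching. We then show that the weight charged to each final edge $e=(p,q)$ is at most $\max\{\frac{r^3}{r^2-1},\,1+2r\}\cdot w(e)$, where $w(e)=w(p)+w(q)$. Since $\OPT=W$, summing over the final edges gives the strict ratio $\min\{\frac{r^2-1}{r^3},\frac{1}{1+2r}\}$. The structural facts come from Proposition~\ref{prop}. Each region has a responsible edge consisting of currently matched points. Each edge is responsible for at most two regions. Regions are convex. A replacing edge lies inside the region where the new point arrived. Together these let us treat each region's history independently.

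\textbf{Step 1 (the invariant for unmatched points).} When a point $p_i$ arrives in a region $R$ that already holds an unmatched point, \BIM matches it. So at any time each region contains at most one unmatched point, and that point arrived alone, with the responsible edge $e_R$ already in place. Let $h(e)$ denote the heavier endpoint weight of $e$. The point stayed unmatched only because $w(p_i)<r\,h(e_R)$. The responsible edge of $R$ can later change in two ways. If $p_i$ itself gets matched, the point is no longer unmatched. If a revocation replaces $e_R$ by $(p_{i'},p_{j'})$ where $w(p_{i'})\ge r\,h(e_R)$, then the new edge is heavier than the old one by a factor $r$ in its heavy endpoint. In that second case we re-charge $p_i$ to the new edge. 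The re-charged weight still satisfies $w(p_i)<r\,h(e_R)\le h(\text{new edge})$. Each final edge is responsible for at most two regions, so it absorbs at most two such unmatched points, each of weight less than $r\,h(e)$. We must still check the merged-region case, where a merged region $R''$ might contain two unmatched points. In that case the new edge $\overline{p_ip_{j'}}$ splits $R''$ back along the old sides, so the bound of one unmatched point per responsible region should persist. If it does not, we absorb the extra point into the chain accounting of Step 2.

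\textbf{Step 2 (revocation chains).} Revoking $(p_j,p_{j'})$ frees $p_j$, the lighter endpoint. Consider a final edge $e$ and follow its ancestry backwards: $e_0\to e_1\to\dots\to e_t=e$, where each $e_{s+1}$ replaced $e_s$. The surviving heavy endpoints satisfy $h(e_{s+1})\ge r\,h(e_s)$. Each replacement orphans one point of weight at most $h(e_s)$, and $e_0$ contributes its own lighter endpoint as well. The orphaned weight is therefore at most
\[
\sum_{s<t} h(e_s) \le h(e)\sum_{m\ge1} r^{-m} = \frac{h(e)}{r-1}.
\]
We add this to the unmatched points charged in Step 1, all of which we now bound relative to $h(e)$ and $w(e)\ge h(e)$. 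Doing this crudely gives a total charge of about $\frac{1}{r-1}+2r$. That is too weak, so the accounting must be sharpened, and this sharpening is where the two terms of the minimum should come from. Note that the final edge $e=(p_{i'},p_{j'})$ has weight $w(p_{i'})+w(p_{j'})\ge (1+\frac1r)\,w(p_{i'})$. The lighter endpoint of an edge is never the heavy one of a later replaced edge, which lets us count only $h$-values. Bounding unmatched points in a region responsible to $e$ by $r$ times the heavy endpoint of the edge current when they arrived, combined with the geometric decay along the chain, should give a worst case of the form $\frac{r^3}{r^2-1}$. The factor $\frac{r^2}{r^2-1}$ is a geometric series in $r^{-2}$, reflecting that two replacement steps are needed per re-charge. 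In the case with no revocations, an edge with two lone unmatched neighbours gives $\frac{w(e)+2r\,h(e)}{w(e)}\le 1+2r$.

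\textbf{Main obstacle.} The hard step is the precise chain accounting in Step 2. We must get the series in $r^{-2}$ rather than $r^{-1}$, and handle the merged-region case without double-charging. The condition $r\le\sqrt2$ should enter exactly here, to guarantee that a point orphaned by a replacement cannot exceed the charge budget of the next edge. We will first attempt an amortized potential $\Phi(e)=c\cdot h(e)$ maintained per edge. The goal is to verify that each replacement step preserves the inequality (charged weight) $\le \max\{\frac{r^3}{r^2-1},1+2r\}\,w(e)$ by a local case check: revocation with a single responsible region, revocation with a merged region, and an ordinary match.
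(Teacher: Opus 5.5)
\textbf{There is a genuine gap.} Your proposal does not reach the stated bound. The accounting you actually carry out charges orphans at most $h(e)/(r-1)$ and two lone points each below $r\,h(e)$. That gives only roughly $1/(1+\frac{1}{r-1}+2r)$. The sharpening that produces $\min\{\frac{r^2-1}{r^3},\frac{1}{1+2r}\}$ is deferred to your ``main obstacle'', yet it is the entire content of the theorem, and the hints you give for it point the wrong way.
\begin{itemize}
\item Your inequality $w(e)\ge(1+\frac1r)\,w(p_{i'})$ is reversed. Since the new point satisfies $w(p_{i'})\ge r\,w(p_{j'})$, one gets $w(e)\le(1+\frac1r)\,w(p_{i'})$. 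The useful fact is $w(e)\ge(1+r)\,w(p_{j'})$.
\item $\frac{r^3}{r^2-1}$ is not a series in $r^{-2}$. It equals $r+\frac{r}{r^2-1}$, where $\frac{r}{r^2-1}$ is the ordinary $r^{-1}$ series $\frac{r}{r-1}\alpha$ divided by the matched weight $(1+r)\alpha$.
\item Your Step~1 invariant (one unmatched point per region) is false. After a revocation, the freed lighter endpoint lies on the old line inside the merged region. It can land in the same new sub-region as a lone point from the other old region.
\end{itemize}

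\textbf{What is missing.} Take a chain $e_1,\dots,e_k$ with $k\ge 2$. Let $\alpha=w(p_{i_k})$ be the surviving endpoint of the final edge and $\beta=w(p_{i_{k+1}})\ge r\alpha$ the new one.
\begin{enumerate}
\item Measure the orphans against $\alpha$, not against $h(e)=\beta$. They weigh at most $\alpha r^{-(k-2)}+\alpha\sum_{m=1}^{k-2}r^{-m}$.
\item You need the structural fact that at most two never-matched points are charged to such a chain. One arrives after $e_k$ and has weight $<r\beta$. The other arrives while $e_1$ is current and has weight $<r\,w(p_{i_2})\le\alpha r^{-(k-3)}$.

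The reason is as follows. Suppose a never-matched point arrives while $e_j$ is current, with $2\le j<k$. It cannot share a region with the orphan $p_{i_{j-1}}$, or it would be matched to it. So it occupies the other region of $e_j$. Then $p_{i_{j+2}}$ cannot arrive alone in a region of $e_j$, and $e_j$ would never be revoked.
\item The hypothesis $r\le\sqrt2$ is exactly what absorbs the early lone point into the geometric series. It is absorbed together with the first orphan, which only satisfies $w(p_{i_1})\le w(p_{i_2})$. The identity is $r^{-(k-3)}+r^{-(k-1)}\frac{r(r-2)}{r-1}=r^{-(k-1)}\frac{r(r^2-2)}{r-1}\le 0$.
\end{enumerate}
Together these bound the total weight by $\frac{r}{r-1}\alpha+(1+r)\beta$, against matched weight $\alpha+\beta$. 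The ratio is minimized at $\beta=r\alpha$, which gives $\frac{r^2-1}{r^3}$. The term $\frac{1}{1+2r}$ comes only from the no-revocation case $k=1$, where two lone points of weight $<r\beta$ can be charged. Your plan contains neither the $\alpha$-normalization nor the ``one lone point per chain after the first edge'' argument, so as written it does not prove the theorem.
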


\begin{proof}
For each region an edge is responsible for, we consider the total weight of unmatched points in that region. These points fall into two categories: those that were matched at some point during the execution, but due to revoking became unmatched, and those that were never matched during the entire execution of the algorithm.

  Consider any subsequence of all created edges, $\langle e_1,\ldots, e_k\rangle$, where $e_1$
  was created when a second unmatched point arrived in some region, and the possible remaining edges were created via revokings, i.e., $e_i$ caused $e_{i-1}$ to be revoked for $2\leq i \leq k$, and $e_k$ is in \BIM's final matching. Let $e_j=(p_{i_j},p_{i_{j+1}})$ and $w(p_{i_j})\leq w(p_{i_{j+1}})$, so $e_{j+1}= (p_{i_{j+1}},p_{i_{j+2}})$. Thus, for $3\leq j \leq k+1$, $p_{i_j}$ arrived after $p_{i_{j-1}}$. Every pair ever matched by \BIM is included in some such sequence of edges. The points, $p_{i_1},\ldots,p_{i_{k-1}}$ could be unmatched points in a region for which $e_k$ is responsible.

We bound the total weight of points that were at some point in a region that one of the $e_j$'s were responsible for. We divide this up and first consider the points that were matched during the execution but no longer are (the $p_{i_j}$'s) and consider points that were never matched afterwards.
  
  Let $\alpha = w(p_{i_k})$, so for every $2\leq j \leq k$, $w(p_{i_j})$ is at most $\alpha r^{-(k-j)}$ and $w(p_{i_1})\leq w(p_{i_2}) \leq \alpha r^{-(k-2)}$. Let $\beta = w(p_{i_{k+1}})$.
    The total weight of points in this sequence is
    \begin{align*}\sum_{j=1}^{k+1} w(p_{i_j}) &= w(p_{i_1}) + \sum_{j=2}^{k} w(p_{i_j}) + w(p_{i_{k+1}}) \\ &\leq 
\alpha r^{-(k-2)} + \alpha \left(\frac{r}{r-1}\right)(1-r^{-(k-1)}) + \beta\\
&= \alpha \left(r^{-(k-1)} \frac{r(r-2)}{r-1} + \frac{r}{r-1} \right) + \beta.
\end{align*}
    Now, we consider other points that were never matched, but were at
    some time in a region for which one of the $e_j$ was responsible.
    After $e_1$ is created and before $e_2$, a first point $q_1$ could arrive in one of the regions for which $e_1$ is responsible. Since $q_1$ is not matched to $p_{i_2}$, $w(q_1)<r w(p_{i_2})$. (Note that a second point arriving in that region will then be matched to $q_1$, dividing the region, and the sub-regions will not be considered part of the region for which $e_k$ eventually becomes responsible.) Now, suppose that another point, $q_2$, arrives between when $e_j$ and $e_{j+1}$ are created for some $2\leq j<k$, remaining unmatched in one of the regions for which $e_k$ is responsible. Then, neither $q_2$ nor $p_{i_{j+2}}$ is in the same region as $p_{i_{j-1}}$ or one of them would have been matched to $p_{i_{j-1}}$ (or $p_{i_{j-1}}$ was already matched and the region divided). 
    By Proposition~\ref{prop}.\ref{obs:atmosttworegions}, $e_i$ is responsible for at most two regions,
    so $p_{i_{j+2}}$ arrives in the same region as $q_2$, while it is unmatched.
    This is a contradiction, since \BIM would match them.
    Thus, other than $q_1$, the only never-matched point, $q_2$, in a region for which $e_k$ is responsible, arrives after $e_k$ and $w(q_2)<r w(p_{i_{k+1}})$.
   
    Then, for $k\geq 2$, the total weight of unmatched points in regions for which $e_k$ is responsible is at most 
    \[
        \left(r^{-(k-3)} + r^{-(k-1)} \frac{r(r-2)}{r-1} + \frac{r}{r-1} \right)\alpha + (1+r) \beta.
    \]
    If $r\leq\sqrt{2}$, then $r^{-(k-3)} + r^{-(k-1)} \frac{r(r-2)}{r-1}$ is at most zero and we can bound the total weight when $k\ge2$ by:
$(\frac{r}{r-1})\alpha + (1+r) \beta.$
    Thus, the ratio between the weight of matched points in the sequence $\langle p_{i_1}, p_{i_2}, \ldots, p_{i_{k+1}}\rangle$ and the total weight of all points associated with this sequence for $k \ge 2$ is at least $\frac{\alpha+\beta}{(\frac{r}{r-1})\alpha + (1+r) \beta}$. Since $\frac{r}{r-1}>1+r$, for $1<r\leq \sqrt{2}$, this ratio is minimized when $\beta$ is minimized, which happens at $\beta = r\alpha$. Thus, the competitive ratio for $k \ge 2$ is at least
    $(1+r)/(\frac{r}{r-1}+r(1+r))=\frac{r^2-1}{r^3}$.
    
    Now, consider the case of $k = 1$, and let $\alpha$ and $\beta$ have the same meaning as above. Then the sequence $\langle e_{i_1}, e_{i_2}, \ldots, e_{i_k}\rangle$ consists of a single edge. Thus, the weight of the matched points is $\alpha + \beta$, and there could be two unmatched points $q_1$ and $q_2$ at the end of the execution of the algorithm charged to this edge. We have $w(q_i) < r \beta$, so the ratio between the weight of matched points and the total weight of all points associated with the sequence for the case of $k = 1$  is at least $\frac{\alpha + \beta}{\alpha + (1+2r)\beta}$. Observe that this ratio is minimized when $\beta$ goes to infinity and becomes $1/(1+2r)$.

    Taking the worse ratio between the above two scenarios proves the statement of the theorem.
    \end{proof}

\begin{corollary}
  With the choice of parameter for \BIM, $r^*$, defined as the positive solution to the equation
  $\frac{1}{1+2r}=\frac{r^2-1}{r^3}$,
  approximately $1.2470$,
  we get a competitive ratio of
  $\frac{1}{1+2r^*}$, at least $0.2862$.
    \end{corollary}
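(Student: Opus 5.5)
The corollary is a direct instantiation of Theorem~\ref{thm:revoke_ub}. The plan has three steps: locate the crossing point of the two bounds, check that it lies in the admissible range $(1,\sqrt{2}]$, and evaluate the resulting ratio numerically.

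First, I compare the two functions $f(r)=\frac{r^2-1}{r^3}$ and $g(r)=\frac{1}{1+2r}$ on $(1,\sqrt2]$.
\begin{itemize}
\item At $r\to 1^+$ we have $f\to 0<g(1)=1/3$.
\item At $r=\sqrt2$ we have $f(\sqrt2)=\frac{1}{2\sqrt2}\approx 0.3536$, while $g(\sqrt2)=\frac{1}{1+2\sqrt2}\approx 0.2612$.
\item $g$ is strictly decreasing.
\item $f'(r)=\frac{3-r^2}{r^4}>0$ on this interval, so $f$ is strictly increasing.
\end{itemize}
Hence $f-g$ is continuous and strictly increasing on $(1,\sqrt2]$, changes sign, and so has exactly one root $r^*$ in $(1,\sqrt2)$.

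Second, I identify $r^*$ explicitly. Clearing denominators in $f(r)=g(r)$ gives $(r^2-1)(1+2r)=r^3$, which simplifies to
\[
r^3+r^2-2r-1=0.
\]
This is the minimal polynomial of $2\cos(2\pi/7)\approx 1.2470$. Because $f$ is increasing and $g$ is decreasing, $\min\{f,g\}$ is maximized on the interval exactly at $r^*$. Theorem~\ref{thm:revoke_ub} then gives a strict competitive ratio of at least $g(r^*)=\frac{1}{1+2r^*}$.

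Third, I verify the numerical claim $\frac{1}{1+2r^*}\ge 0.2862$. This requires only the upper bound $r^*\le 1.2470$, since $1/(1+2\cdot 1.2470)=1/3.494\approx 0.28620$. To certify this, I check that the cubic is positive at $1.2470$, so that the root lies below it by the monotonicity argument above.

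There is a catch here. Since $2\cos(2\pi/7)\approx 1.24698$, the value $1/(1+2r^*)\approx 0.286208$, so the stated bound of $0.2862$ holds, but only barely. The only real care needed is this rounding direction, which I would settle by a short interval evaluation of the cubic.
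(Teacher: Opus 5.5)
Your proposal is correct and follows the same route as the paper: equate the two terms in the minimum of Theorem~\ref{thm:revoke_ub} and solve. It is more careful than the paper's one-line argument, both in checking that $r^*\in(1,\sqrt2]$ (so the theorem actually applies) and in confirming that rounding $\frac{1}{1+2r^*}$ down to $0.2862$ is valid. The paper's proof also gives a four-point instance showing that $\frac{1}{1+2r^*}$ is tight for \BIM; you omit this, which is harmless as long as ``a competitive ratio of'' is read, as elsewhere in the paper, as a guarantee rather than an exact value.
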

\begin{proof}
  The value $r^*$ is obtained by setting the two terms in the minimum in Theorem~\ref{thm:revoke_ub} equal to each other and solving for $r$,
  giving the lower bound on the competitive ratio.
    
    To show that this result is tight for this algorithm, consider the following input: $p_1$ of weight $\alpha$ arrives at the north pole of the unit sphere, followed by $p_2$ of weight $\beta \ge \alpha$ at the south pole of the unit sphere, followed by $p_3$ of weight $r^* \beta-\varepsilon$ at the west pole of the unit sphere, and followed by $p_4$ of weight $r^* \beta - \varepsilon$ at the east pole of the unit sphere. The algorithm would end up matching $p_1$ with $p_2$, leaving $p_3$ and $p_4$ unmatched. Thus, in such an instance, the competitive ratio of the algorithm is $(\alpha + \beta)/(\alpha + \beta + 2r^*\beta - 2 \varepsilon)$. Taking $\beta$ to infinity and $\varepsilon$ to zero shows that the algorithm does not guarantee a competitive ratio better than $1/(1+2r^*)$ in the strict sense.
\end{proof}

\section{Collinear Points} \label{sec:collinear}

In this section, we investigate the non-crossing matching problem when all points lie on a line. First, we show that, contrary to the general case, randomization or revoking  alone does not yield any non-trivial competitive ratio, even when the points are unweighted. In other words, this shows that the results for the randomized algorithm and the revoking algorithm we presented above rely on points being in general position. 

\begin{theorem}	\label{thm:random_revoke_negative}
When points arrive in arbitrary positions, even if the points are unweighted and all lie on a line,

	(1) no randomized algorithm can achieve a competitive ratio better than $2/n$, and

	(2) no deterministic algorithm with revoking can achieve a competitive ratio better than $1/n$. 
\end{theorem}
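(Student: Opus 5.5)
Both parts follow one idea. On a line, a matched pair $(p,q)$ occupies the whole interval $[p,q]$. Any later point placed strictly inside $(p,q)$ is therefore trapped: it cannot be matched to anything outside, since that segment would overlap $[p,q]$. So the adversary's basic move is to put every future point inside the most recently created matched interval. I use $n$ for the total number of points in the bound, so "$2/n$" means one matched pair out of $n$ points, and I would check the normalization against the paper's convention.

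\textbf{Part (2), revoking.} I argue adaptively against a fixed deterministic algorithm \ALG with revoking. Points of weight $1$ arrive on the line, always to the right of all previous points, until \ALG first matches a pair $(p,q)$. From then on, each new point $x$ is placed strictly inside the interval of the \emph{currently} matched segment.

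The invariant is that at most one pair is matched at any time. When $x$ arrives inside the current segment $[p,q]$, \ALG has three options:
\begin{itemize}
\item It leaves $x$ unmatched. Then the segment still blocks $x$.
\item It revokes $(p,q)$ and matches $x$ to $p$, to $q$, or to a point between them. The new segment is nested inside $[p,q]$, and the adversary places the next point inside this new segment.
\item It matches $x$ to a point outside $[p,q]$. This is impossible without revoking, and \ALG may revoke only one edge per arrival.
\end{itemize}
Every point not currently matched is either trapped inside the segment or lies outside it. If \ALG later revoked to reach an outside point, the new segment would have to be created by the arriving inner point. That point could only match something whose segment stays within the old interval, so the nesting argument keeps at most one live pair. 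Hence \ALG ends with weight at most $2$ out of $2n$ points, and the ratio is at most $1/n$.

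\textbf{Part (1), randomized.} I use Yao's principle as in Theorem~\ref{thm:rand_lb}, with a random input against deterministic algorithms. The difficulty is that a deterministic algorithm facing a fixed prefix matches at some deterministic step, while the randomized adversary cannot adapt to it. My plan is to make the \emph{geometry} random so that any match traps all later points with constant probability each time.
\begin{itemize}
\item The points form a sequence of nested gaps.
\item Point $p_{i+1}$ is placed in the middle of a gap adjacent to $p_i$, on the left or the right, chosen by a fair coin.
\item The construction keeps every earlier point outside the current gap, so that any segment through $p_i$ and an earlier point covers either the left or the right neighboring gap of $p_i$.
\end{itemize}
When \ALG matches $p_i$ to an earlier point, the next coin sends all future points into the covered side with probability at least $1/2$, and then no further matches are possible. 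I would then bound the expected number of matched pairs by a geometric-type sum and tune the construction, for example with a random total length, so that the expectation is at most one pair. This gives ratio $2/n$.

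\textbf{Main obstacle.} The hard step is making Part (1) come out at exactly one expected pair rather than a constant number. A match of $p_i$ with an earlier point on the ``wrong'' side may leave a free region in which later points can still pair among themselves. If the simple coin construction gives only $O(1/n)$ rather than $2/n$, I would fall back to presenting the stated constant via a sharper distribution. That distribution would use a uniformly random stopping time for the trap, after which all remaining points go inside the gap of the most recent point. Its analysis would be a direct calculation over the deterministic algorithm's (fixed) first matching time.
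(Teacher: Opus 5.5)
Your constructions are the same as the paper's: nested gaps chosen by a fair left/right coin for (1), and the next point always strictly inside the live matched segment for (2). However, Part (1) is never completed, because of a normalization slip. Throughout the paper, and in your own Part (2), the input has $2n$ points and \OPT matches $n$ pairs. So a ratio of $2/n$ only requires at most $2$ matched pairs in expectation.

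You instead set the goal of one expected pair, which your construction cannot deliver. Under the coin construction, consider an algorithm that leaves $p_2$ unmatched and afterwards matches each arriving point to an unmatched neighbour whenever this is legal. It collects $1+\frac12+\frac14+\cdots$ pairs in expectation up to truncation, i.e., essentially $2$. The random-trap-time fallback does not help either: unless the points before the trap are themselves placed in a trapping way (which returns you to the coin construction), \ALG can collect many pairs before the trap.

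The missing step is just to finish the geometric sum you already set up. If \ALG matches the arriving $p_i$ to an earlier point, the segment covers one of the two gaps adjacent to $p_i$. The coin deciding where $p_{i+1}$ goes is independent of everything \ALG has seen. So with probability $1/2$ the point $p_{i+1}$ lies inside that segment, and so does every later point, since later points stay nested in $p_{i+1}$'s gap. After that, no arriving point can ever be matched. Thus the number $M$ of pairs satisfies $\Pr[M\ge k+1\mid M\ge k]\le 1/2$ for $k\ge 1$, so $\Ex[M]\le 2$ and the ratio is at most $2/n$. This is exactly the paper's argument. The paper also places $p_3$ between $p_1$ and $p_2$, so that all gaps are bounded and matching $p_1$ with $p_2$ ends all matching.

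Part (2) reaches the right conclusion, but one step is wrong and one case is missing. After \ALG revokes the only live edge $(p,q)$, no segment remains. The arriving inner point may then be matched to any unmatched point, including one outside $[p,q]$, so your claim that it can only match something whose segment stays within the old interval is false. No nesting is needed: each arrival creates at most one edge, and an arrival strictly inside the live segment can be matched only after that segment is revoked, so at most one pair is ever live. You also omit the case where \ALG revokes $(p,q)$ and leaves the new point unmatched. Then there is no live segment, and the adversary must resume free placement until the next match, as the paper does.
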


\begin{proof}
   (1) We construct a random input, bound the competitive ratio of any deterministic algorithm on that input, and prove the theorem using Yao's minimax principle. To begin, place $p_1$ and $p_2$ in arbitrary positions on the line, then place $p_3$ between them. For each subsequent point $p_i$, randomly choose either the left or right segment relative to the last point $p_{i-1}$, and place $p_i$ in the middle of that segment. This process continues until a total of $2n$ points are positioned on the line.

  If a deterministic algorithm matches $p_1$ and $p_2$,
  the probability that it can match further points is zero.
  If it does \emph{not} match $p_1$ and $p_2$, we are in the following
  situation:
  Whenever the algorithm matches a new arriving point, there is a $\frac{1}{2}$ probability that all future points will arrive on the same side, preventing further matches.
  Consequently, the expected size of the matching created by the algorithm is bounded by $1+\frac12(1+\frac12(1+\frac12(1 + \cdots)))=2$.
However, a perfect matching is always possible, leading to a competitive ratio of $\frac{2}{n}$.

(2) The adversary’s strategy is designed to ensure that the algorithm can maintain at most one matched pair at any time. Initially, the adversary places points in arbitrary positions until the algorithm forms a match between two points. Once this match is created, the adversary starts placing new points between the matched pair, forcing the algorithm to revoke the match in order to form any new match.

Once the algorithm revokes the match, the adversary goes back to placing points in arbitrary positions. However, as soon as the algorithm creates another match, the adversary again places points between the matched pair, preventing any further progress. This cycle continues: the adversary alternates between placing points freely when no match exists and placing points between matched pairs once a match is formed. This ensures that the algorithm can only ever maintain one match, leading to a competitive ratio of $1/n$.
\end{proof}

Next, we present a randomized algorithm with revoking, called ``Random-Revoking-Matching'' (\RRM), which achieves a competitive ratio of at least $0.5$ for the unweighted case (see Algorithm \ref{alg:rrm}). \RRM maintains a partition of the line into intervals and matches points only within those intervals. The intervals may be open, half-open, or closed. Initially, the entire line is an open interval. Throughout the execution of the algorithm, the following conditions hold for partitioning the line:

\begin{itemize}
    \item Open and half-open intervals contain no matched pairs.
    \item Each half-open interval contains exactly one unmatched point, located on its closed boundary.
    \item Closed intervals contain exactly two points, both located on the boundaries and matched together.
\end{itemize}

These conditions hold at the beginning, with the line being one open interval, and they remain valid after each step of the algorithm. The decisions of the algorithm ensure that these conditions are preserved throughout its execution.

The first point that arrives in an open interval remains unmatched, and the partitioning does not change. When a second point arrives in an open interval, it is matched to the previously available point. This action splits the open interval into a closed interval containing the matched points and two new open intervals on either side of the matched segment.

When a point arrives in a closed interval, \RRM revokes the matched segment, randomly matches the new point to one of the now available points, and divides the interval into a closed interval containing the newly matched points and a half-open interval with the previously matched point on its closed boundary.

Finally, when a point arrives in a half-open interval, it is matched to the unmatched point on its closed boundary, splitting the interval into a closed interval between the matched points and an open empty interval (see Figure~\ref{fig:rrm}).

\begin{algorithm}[!h]
\caption{$RandomRevokingMatching$}\label{alg:rrm}
\begin{algorithmic}
\Procedure{$RandomRevokingMatching$}{}
\While{receive a new point $p_i$}
\State{Let $I$ be the interval that $p_i$ arrives in.}
\If{$I$ is an open interval $(x,y)$}
\If{$I$ contains another point $p_j$}
\State{Match $p_i$ with $p_j$.}
\If{$p_i$ is on the left side of $p_j$}
\State{Divide $I$ into $(x,p_i)$, $[\,p_i,p_j]\,$, and $(p_j,y)$.}
\Else
\State{Divide $I$ into $(x,p_j)$, $[\,p_j,p_i]\,$, and $(p_i,y)$.}
\EndIf
\EndIf
\ElsIf{$I$ is a closed interval $[\,p_j,p_k]\,$}
\State{Revoke $\overline{p_jp_k}$.}
\State{Set $r$ uniformly at random to $0$ or $1$.}
\If{$r = 0$}
\State{Match $p_i$ with $p_j$}
\State{Divide $I$ into $[\,p_j,p_i]\,$ and $(p_i,p_k]\,$}
\Else
\State{Match $p_i$ with $p_k$}
\State{Divide $I$ into $[\,p_j,p_i),$ and $[\,p_i,p_k]\,$}
\EndIf
\ElsIf{$I$ is a half-open interval $[\,p_j,x)$}
\State{Match $p_i$ with $p_j$.}
\State{Divide $I$ into $[\,p_j,p_i]\,$ and $(p_i,x)$}
\ElsIf{$I$ is a half-open interval $(x,p_j]\,$}
\State{Match $p_i$ with $p_j$.}
\State{Divide $I$ into $(x,p_i)$ and $[\,p_i,p_j]\,$}
\EndIf
\EndWhile
\EndProcedure
\end{algorithmic}
\end{algorithm}

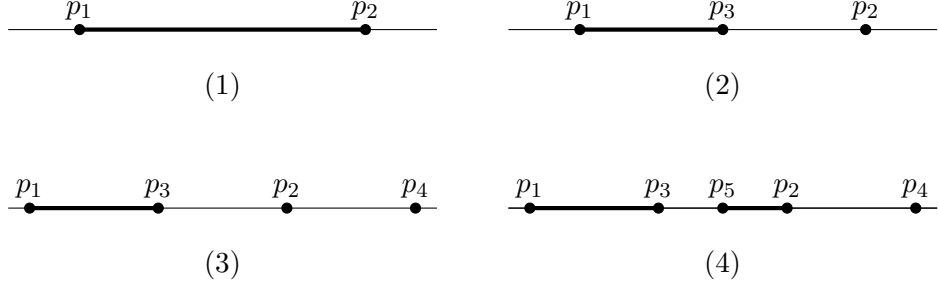
\begin{figure}[ht]
\centering
\vspace{0.5cm}
\begin{tikzpicture}[scale=0.95]

\draw[thin] (-3,0) -- (3,0);
\filldraw (-2,0) circle (2pt) node[above] {$p_1$};
\filldraw (2,0) circle (2pt) node[above] {$p_2$};
\draw[ultra thick] (-2,0) -- (2,0); 
\node at (0,-0.8) {$(1)$};

\begin{scope}[xshift=7cm]
\draw[thin] (-3,0) -- (3,0); 
\filldraw (-2,0) circle (2pt) node[above] {$p_1$};
\filldraw (0,0) circle (2pt) node[above] {$p_3$};
\filldraw (2,0) circle (2pt) node[above] {$p_2$};
\draw[ultra thick] (-2,0) -- (0,0); 
\node at (0,-0.8) {$(2)$};
\end{scope}

\begin{scope}[yshift=-2.5cm]
\draw[thin] (-3,0) -- (3,0); 
\filldraw (-2.7,0) circle (2pt) node[above] {$p_1$};
\filldraw (-0.9,0) circle (2pt) node[above] {$p_3$};
\filldraw (0.9,0) circle (2pt) node[above] {$p_2$};
\filldraw (2.7,0) circle (2pt) node[above] {$p_4$};
\draw[ultra thick] (-2.7,0) -- (-0.9,0); 
\node at (0,-0.8) {$(3)$};
\end{scope}

\begin{scope}[xshift=7cm, yshift=-2.5cm]
\draw[thin] (-3,0) -- (3,0); 
\draw[thin] (-3,0) -- (3,0); 
\filldraw (-2.7,0) circle (2pt) node[above] {$p_1$};
\filldraw (-0.9,0) circle (2pt) node[above] {$p_3$};
\filldraw (0,0) circle (2pt) node[above] {$p_5$};
\filldraw (0.9,0) circle (2pt) node[above] {$p_2$};
\filldraw (2.7,0) circle (2pt) node[above] {$p_4$};
\draw[ultra thick] (-2.7,0) -- (-0.9,0); 
\draw[ultra thick] (0,0) -- (0.9,0); 
\node at (0,-0.8) {$(4)$};
\end{scope}

\end{tikzpicture}
\caption{(1) The first two points arrive, partitioning the line into $(-\infty,p_1)$, $[\,p_1,p_2]\,$, and $(p_2, \infty)$. (2) Point $p_3$ arrives in $[\,p_1,p_2]\,$. \RRM revokes $\overline{p_1p_2}$, randomly matches $p_3$ to $p_1$, and partitions $[\,p_1,p_2]\,$ into $[\,p_1,p_3]\,$ and $(p_3,p_2]\,$. (3) Point $p_4$ arrives in $(p_2,\infty)$ and remains unmatched as the interval it arrived in is empty. (4) Point $p_5$ arrives in $(p_3,p_2]\,$ and is matched to $p_2$, splitting the interval into $(p_3,p_5)$ and $[\,p_5,p_2]\,$.}
\label{fig:rrm}
\end{figure}

\begin{theorem}
    When points are unweighted and arrive on a line,
    the competitive ratio of the randomized algorithm \RRM, with access to revoking, is at least $1/2$.
\end{theorem}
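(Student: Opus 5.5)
The plan is a charging argument that exploits the interval invariants stated above. At every moment, each point lies in exactly one of four situations: it is matched, so it is an endpoint of a closed interval; it is the unmatched boundary point of a half-open interval; it is the unique unmatched point inside an open interval; or it is inside an open interval alongside another point. The last case cannot persist, because the second point to arrive in an open interval is matched immediately. Hence every open interval contains at most one point, and every unmatched point at the end is either alone in an open interval or sits on the closed end of a half-open interval.

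First I bound the points of the first kind. Open intervals are created only when a match is made inside an open interval, and each such match creates one closed interval and two open intervals. Arrivals in half-open intervals also create an open interval, but it is empty, and nothing empties it later except a new arrival. So I charge each open interval holding a lone point to the match that created it. There is one initial open interval, each open-interval match yields at most two lone points, and each half-open match yields at most one. This alone is too weak to give $1/2$. I will therefore sharpen it: a closed interval always sits between its two flanking regions, so I will count regions along the line. Let $c$, $h$ and $o$ be the numbers of closed, half-open and open intervals. Closed intervals are never adjacent to each other in a way that permits two open neighbours per match. My aim is an inequality of the form (number of unmatched points) $\le$ (number of matched points) $+\,O(1)$ holding deterministically, with the randomness needed only against the adversary's choice of where to place points inside half-open intervals.

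The central step, and the one I expect to be the obstacle, is the half-open case. A revocation inside a closed interval $[p_j,p_k]$ by a new point $p_i$ produces one new match and one half-open interval whose boundary point is currently unmatched. That boundary point becomes matched again only if some later point lands on the correct side, namely between $p_i$ and it. Since the partner choice is uniform, each of $p_j,p_k$ becomes the stranded point with probability $1/2$, independently of the adversary's placement. My plan is a potential argument: assign each matched closed interval weight $2$, each half-open interval weight $1$ (its boundary point), and each open interval with a point weight $0$, and show that in expectation the matched mass is at least half of all arrived points up to an additive constant. Per step, I verify the expected change case by case.

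\begin{itemize}
\item An arrival into an empty open interval adds $1$ point and $0$ matched.
\item A second arrival in an open interval adds $1$ point and $2$ matched.
\item An arrival into a closed interval adds $1$ point and changes the matched count by $0$. This leaves a half-open interval, and a point arriving there later adds $2$ matched for $1$ point.
\item An arrival into a half-open interval adds $1$ point and $2$ matched.
\end{itemize}

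Closed-interval arrivals are the only net-negative steps, at ratio $0$, so I will pair each of them with a future half-open resolution, or, if it is never resolved, with the boundary point it stranded. For this pairing I will use the randomness: the adversary cannot tell which endpoint is stranded, so it cannot place the next point to hit a closed interval without also risking resolving the half-open one. I will try to show that the expected number of unresolved half-open intervals, plus lone open points, is at most the expected number of matched points plus a constant, which gives ratio $1/2$. If this accounting fails, my fallback is to use the left-to-right alternating structure of intervals to bound lone points by $c+1$ directly.
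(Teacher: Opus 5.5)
\paragraph{Verdict.} The proposal has a genuine gap: the one inequality that carries the theorem is never proved.

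\paragraph{What works.} Your reduction is sound. At the end, every unmatched point is either the lone point of an open interval or the stranded endpoint of a half-open interval. Write $\ell$, $u$, $M$ for the numbers of lone points, stranded points and pairs. The theorem then amounts to $\Ex[\ell+u]\le 2\Ex[M]+O(1)$. Your fallback for $\ell$ also works, and it works pathwise. Every step preserves $\#\text{open}=\#\text{closed}+1$, and the closed intervals are exactly the pairs, so $\ell\le M+1$.

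\paragraph{The gap.} What remains is $\Ex[u]\le \Ex[M]+O(1)$, and this is the entire difficulty. Your proposal only says you ``will try to show'' it.
\begin{itemize}
\item The remark that the adversary cannot tell which endpoint is stranded covers only one case. That case is a closed-interval arrival whose remaining future points all lie on one side of it; then a stranding and a later new pair each occur with probability $1/2$.
\item It does not handle the last arrival into a closed interval. That arrival strands a point with certainty and gains no pair.
\item It does not show that these certain strandings and the coin-dependent ones together amount to at most about one per pair.
\item Your first stated aim, an inequality holding pathwise, is impossible. Such an inequality would hold for \RRM with its coins fixed in advance, which is a deterministic revoking algorithm. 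That contradicts Theorem~\ref{thm:random_revoke_negative}(2). Concretely, take points $p,q,x_1,x_2,\dots$ with each $x_{t+1}$ between $p$ and $x_t$. On the run where the coin always matches the new point to $p$, every arrival strands the previous partner of $p$, leaving one pair and $N-2$ unmatched points.
\item The coin matters at closed-interval arrivals, not inside half-open intervals as you suggest.
\end{itemize}

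\paragraph{The missing ingredient.} You need an induction over intervals indexed by their number of future points. Against an oblivious adversary this is well defined, because the input fixes how the future points split between the two new pieces. The paper does exactly this. Let $f,g,h$ be the expected number of new pairs from a closed, half-open, or open interval that receives $n$ future points. The paper shows $f(n)\ge (n-1)/4$, $h(n)\ge (n-1)/4$ and $g(n)\ge (n+1)/4$.

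The bound on $g$ can hold only for $n\ge 1$, since $g(0)=0$. However, $g(0)$ enters the closed-interval recurrence only averaged with $f(0)=0$, whose slack of $1/4$ cancels the deficit. This averaging is precisely where randomization is indispensable: a deterministic choice yields only $f(n)\ge f(n-1)+g(0)$.

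In your terms, the same induction shows that $\Ex[\text{stranded}-\text{new pairs}]$ is:
\begin{itemize}
\item at most $1$ for a closed interval;
\item at most $0$ for an open interval, or for a half-open interval that receives at least one point.
\end{itemize}
The critical case is again a closed interval whose remaining points all lie on one side of the new point. There the coin averages an outcome worth at most $0$ with one worth at most $2$. Combined with $\ell\le M+1$, this gives $\Ex[M]\ge (N-1)/4$, which is the paper's bound.
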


\begin{proof}
    Let $f(n)$, $g(n)$, and $h(n)$ represent the expected number of matched pairs when $n$ points arrive in a closed, half-open, and open interval, respectively. For $n=0$, there are no matched pairs. By the definition of \RRM, when a point arrives in an open interval, it remains unmatched, and when a point arrives in a closed interval, a matched pair is revoked and a new one is created, so the number of matched pairs does not change. However, when a point arrives in a half-open interval, \RRM creates a new matching. Therefore, we have the initial conditions: $f(0) = f(1) = g(0) = h(0) = h(1) = 0$ and $g(1) = 1$.

    If two or more points arrive in an open interval, the first two points will be matched, dividing the interval into two open intervals and one closed interval. When $n$ points are expected to arrive in the original open interval, the new intervals will receive $k_1$, $k_2$, and $n - k_1 - k_2 - 2$ points, where $k_1, k_2 \geq 0$ and $k_1 + k_2 \leq n - 2$. Therefore, for $n \geq 2$, we have:
\[
h(n) \geq 1 + \min_{\substack{k_1, k_2 \\ k_1, k_2 \geq 0 \\ k_1 + k_2 \leq n - 2}} \{h(k_1) + f(k_2) + h(n - k_1 - k_2 - 2)\}.
\]

    The first point that arrives in a closed interval revokes the existing matched segment and is randomly matched to one of the points, dividing the interval into one half-open and one closed interval. When $n \geq 2$ points are expected to arrive, the new intervals will receive $k$ and $n - k - 1$ points, where $0 \leq k \leq n - 1$. Therefore, for $n \geq 2$, we have:
\[
f(n) \geq \min_{0 \leq k \leq n - 1} \left\{ \frac{1}{2}\left[ f(k) + g(n - k - 1) \right] + \frac{1}{2}\left[ f(n - k - 1) + g(k) \right] \right\}.
\]

Finally, when the first point arrives in a half-open interval, it gets matched to the only available point, dividing the interval into a closed and an open interval. When $n \geq 2$ points are expected to arrive, the new intervals will receive $k$ and $n - k - 1$ points, where $0 \leq k \leq n - 1$. Therefore, for $n \geq 2$, we have:
\[
g(n) \geq 1 + \min_{0 \leq k \leq n - 1} \left\{ h(k) + f(n - k - 1) \right\}.
\]

   Next, we show the following inequalities by induction for $n \geq 0$:
\[
f(n) \geq \frac{n-1}{4}, \qquad
g(n) \geq \frac{n+1}{4}, \qquad
h(n) \geq \frac{n-1}{4}.
\]

It is straightforward to verify these inequalities for $n = 0$ and $n = 1$. For the induction step, for any $k, k_1, k_2$ such that $0 \leq k \leq n-1$, $k_1, k_2 \geq 0$, and $k_1 + k_2 \leq n - 2$, we have:
\begin{align*}
	h(n) &\geq 1 + h(k_1) + f(k_2) + h(n - k_1 - k_2 - 2)	\\
		&\geq 1 + \frac{k_1 - 1}{4} + \frac{k_2 - 1}{4} + \frac{n - k_1 - k_2 - 3}{4} = \frac{n - 1}{4}.
\end{align*}

For $f(n)$, we get:
\begin{align*}
	f(n) &\geq \frac{1}{2} \left( f(k) + g(n - k - 1) \right) + \frac{1}{2} \left( f(n - k - 1) + g(k) \right) \\
		&\geq \frac{1}{2} \left( \frac{k - 1}{4} + \frac{n - k}{4} + \frac{n - k - 2}{4} + \frac{k + 1}{4} \right) = \frac{n - 1}{4}.
\end{align*}

For $g(n)$, we have:
\[
g(n) \geq 1 + h(k) + f(n - k - 1) \geq 1 + \frac{k - 1}{4} + \frac{n - k - 2}{4} = \frac{n + 1}{4}.
\]

Thus, the expected number of matched pairs by \RRM when $2n$ points arrive on a line is at least
$h(2n) \ge \frac{2n - 1}{4}$,
where the optimum is always $n$, which implies that \RRM achieves a competitive ratio of $1/2$.
\end{proof}

\section{Algorithms with Advice}
\label{sec:advice}

In this section, we consider the advice complexity of the \WNM problem, i.e., how many perfect bits of information is sufficient to be optimal. We use the tape advice model, which is the standard advice model. In this setting, a trustworthy oracle cooperates with an online algorithm according to a pre-agreed protocol. The oracle has access to the entire input sequence in advance. The oracle communicates with an online algorithm by writing bits on the advice tape. When an input item arrives, an algorithm reads some number of advice bits from the tape and makes a decision for the new item based on the advice it read from the tape so far, and the items that have arrived so far. The question of interest is to bound the number of bits that need to be communicated between the oracle and the algorithm on the worst-case input to allow an online algorithm to solve the problem optimally. For an introduction to online algorithms with advice, an interested reader is referred to the survey~\cite{boyar2017online} and references therein.

We propose an online algorithm with advice, which we call ``Split-And-Match'' (\SAM), and show that it achieves optimality. For input sequences of size $2n$, \SAM uses a family of $C_n$ advice strings, where $C_n$ is the $n^\text{th}$ Catalan number.
There are many alternative definitions of the Catalan numbers. Perhaps the most relevant here is $C_0=1, C_n=\frac{2(2n-1)}{n+1}C_{n-1}$, since it most clearly indicates how $C_n$ grows asymptotically.
Note that $O(\log C_n) = O(n)$, and, thus, $O(\log\log C_n) = O(\log n)$.
The oracle encodes each advice string using the Elias delta coding scheme~\cite{Elias75}, which requires
$\log C_n + O(\log\log C_n)$ bits.

The \SAM oracle and algorithm jointly maintain a partitioning of the plane into convex regions and a responsibility relation, where a point can be assigned to be responsible for at most one region, and each region can have at most one point responsible for it. Each region defined will eventually receive an even number of points in total. No region which has not yet been divided into sub-regions contains more than one unmatched point. When a new point, $p$, arrives in a region, $R$, if $R$ does not have a responsible point, then $p$ is assigned to be the responsible point for $R$, and $p$ is left unmatched at this time. Otherwise, suppose that $q$ is the responsible point for $R$ at the time $p$ arrived. In this case, the responsibility of $q$ is removed, and the plane partition is refined by subdividing $R$ into $R_1$ and $R_2$, the sub-regions of $R$ formed by $\overline{pq}$. If the total number of points (including future points, but excluding $p$ and $q$) in $R_i$ is even for each $i \in \{1,2\}$, then $p$ and $q$ are matched (we refer to this event as a ``safe match''), and no points are responsible for either $R_1$ or $R_2$. Otherwise, $p$ and $q$ are not matched, and $q$ is made responsible for $R_1$, and $p$ is made responsible for $R_2$. Note that when a region has a responsible point, that point is assumed to lie in the region by convention, though it can lie on the boundary.

To implement the above procedure in the advice model, the \SAM oracle 
(see Algorithm~\ref{alg:split_oracle}) 
creates a binary string $D$ of length $2n$, where the $i^\text{th}$ bit indicates whether $p_i$ arrives in a region which has some responsible point $p_j$ assigned to it, and $p_i$ and $p_j$ form a safe match. The string $D$ is encoded  on the tape and is passed to \SAM. The \SAM algorithm (see Algorithm~\ref{alg:asap}) 
reads the encoding of $D$ from the tape (prior to the arrival of online points), recovers $D$ from the encoding, and then uses the information in $D$ to run the above procedure creating safe matches.

\begin{algorithm}[h]
\caption{$Split$-$And$-$Match$ Oracle.}\label{alg:split_oracle}
\begin{algorithmic}
\Procedure{$Split$-$And$-$Match$-$Oracle$}{}
\State{$D \gets [0]$}
\State{make $p_1$ responsible for the plane}
\For{$i=2$ to $2n$}
    \State{let $R$ be the region that $p_i$ arrives in}
    \If{$R$ has a responsible point $p_j$}
        \State{revoke the responsibility of $p_j$}
        \State{divide $R$ into $R_1$ and $R_2$ by $\overline{p_ip_j}$}
        \If{$R_L$ (and $R_R$) is going to contain an even number of points in total}
            \State{$D.append(1)$}
        \Else
            \State{make $p_j$ and $p_i$ responsible for $R_1$ and $R_2$ respectively}
            \State{$D.append(0)$}
        \EndIf
    \Else
        \State{make $p_i$ responsible for $R$}
        \State{$D.append(0)$}
    \EndIf
\EndFor
\State{pass $D$ to the algorithm}
\EndProcedure\\
\end{algorithmic}
\end{algorithm}

\begin{algorithm}[h]
\caption{$Split$-$And$-$Match$ Algorithm.}\label{alg:asap}
\begin{algorithmic}

\Procedure{$Split$-$And$-$Match$}{$D$}
\While{receive a new point $p_i$}
    \State{let $R$ be the region that $p_i$ arrives in}
    \If{$R$ has a responsible point $p_j$}
        \State{revoke the responsibility of $p_j$}
        \State{divide $R$ into $R_1$ and $R_2$ by $\overline{p_ip_j}$}
        \If{$D[i] == 1$}
            \State{match $p_i$ with $p_j$}
        \Else
            \State{make $p_j$ and $p_i$ responsible for $R_1$ and $R_2$ respectively}
            \State{leave $p_i$ unmatched}
        \EndIf
    \Else
        \State{make $p_i$ responsible for $R$}
        \State{leave $p_i$ unmatched}
    \EndIf
\EndWhile
\EndProcedure
\end{algorithmic}
\end{algorithm}

Not all binary strings of length $2n$ can be generated as a valid $D$. Thus, the oracle and the algorithm can agree beforehand on the ordering of the universe of possible strings $D$, which we call the advice family. Then the oracle writes on the tape the index of a string in this ordering that corresponds to $D$ for the given input.
The following theorem establishes the correctness of this algorithm, as well as the claimed bound on the advice complexity.

\begin{theorem}
\label{thm:sam}
    \SAM achieves a perfect matching with the advice family of size $C_n$.
\end{theorem}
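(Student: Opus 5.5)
The proof has two parts: correctness (the final matching is perfect and non-crossing), and the counting bound (every string $D$ the oracle can produce belongs to a fixed set of $C_n$ strings). For correctness I will prove an invariant by induction on the arrivals. Call a region of the current partition that has not been subdivided a \emph{leaf}. For a leaf $R$, let its \emph{load} be the number of currently unmatched points in $R$ plus the number of future points that will arrive in $R$. The invariant is:
\begin{itemize}
\item[(a)] every leaf has even load;
\item[(b)] every leaf contains at most one unmatched point, and that point, if it exists, is the point responsible for $R$;
\item[(c)] every leaf is convex, and every matched segment lies in the closure of a region that has since been subdivided, with the segment on the dividing line.
\end{itemize}
Initially the whole plane is a single leaf with load $2n$.

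For the inductive step, let $p$ arrive in leaf $R$. If $R$ has no responsible point, then by (b) it has no unmatched point; $p$ becomes responsible for $R$, and the load is unchanged. Otherwise let $q$ be the responsible point. The line $\overline{pq}$ splits $R$ into convex $R_1,R_2$. The points of $R$ other than $p,q$ (all future points) split into counts $a$ in $R_1$ and $b$ in $R_2$. By (a), $a+b$ equals the load of $R$ minus $2$, which is even, so $a$ and $b$ have the same parity. If both are even, this is exactly the condition for the oracle to write $1$: the match is safe, and both new leaves have even load and no unmatched point. If both are odd, $q$ is assigned to $R_1$ and $p$ to $R_2$, giving loads $a+1$ and $b+1$, both even. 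In either case $\overline{pq}$ lies inside the convex leaf $R$ on the dividing line. Every later segment has both endpoints in the closure of a single later leaf. Since leaves have disjoint interiors, and general position keeps points off earlier lines except at the responsible endpoints, no two segments cross.

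At the end every leaf has no future points, so its load equals its number of unmatched points. This number is even by (a) and at most one by (b), hence zero. So the matching is perfect.

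For the counting bound, I will read $D$ as a lattice path: a $0$ is an up-step and a $1$ is a down-step. Write $u_i$ for the number of unmatched points after $i$ arrivals. A $0$ means $p_i$ stays unmatched, so $u_i=u_{i-1}+1$. A $1$ means $p_i$ is matched to a previously unmatched point, so $u_i=u_{i-1}-1$. Thus $u_i$ is the height of the path after $i$ steps, and it is never negative. It ends at $0$ because the matching is perfect, so $D$ contains exactly $n$ ones. Therefore every valid $D$ is a Dyck word of length $2n$ (its first bit is always $0$), and there are exactly $C_n$ such words. The oracle and \SAM agree in advance on an ordering of the Dyck words, and the oracle writes the index of $D$ in this ordering, which gives an advice family of size $C_n$.

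The main obstacle I expect is the parity step, together with stating the non-crossing argument carefully when a responsible point lies on the boundary of its region. Here I would rely on the convention in the algorithm description that a responsible point belongs to its region, plus general position.
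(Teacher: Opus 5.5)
Your proposal is correct and follows essentially the paper's proof. Your even-load invariant, with at most one unmatched point per leaf and that point being the responsible one, is the paper's type I/type II classification of regions; the parity step (with $a+b$ even, the two parts are either both even or both odd) and the Dyck-word count via the number $u_i$ of unmatched points are the same key steps. Your forward-in-time induction and the explicit non-crossing check add detail to the paper's induction on the number of future points in a region, but they do not change the route.
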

\begin{proof}
    Since the request sequence contains $2n$ points, an even number of points eventually arrive. Inductively, a region that is divided always has an even number of points in both sub-regions, and each of these sub-regions is convex. Thus, any point arriving in a region can be matched to the point that is already there and responsible for the region.
    There are only two kinds of regions that occur during the execution of \SAM, as a new point arrives:
    \begin{itemize}
        \item type I: this region does not have a responsible point, it is empty at the time of creation, and there are an even number of points arriving in this region in the future, and
        \item type II: this region has a responsible point, which is the only point in the region at the time of its creation, and there are an odd number of points arriving in this region in the future.
    \end{itemize}
    We argue inductively (on the number of future points arriving in a region) that the algorithm ends up matching all points inside a region, regardless of their type.
    The base case for a type I region is trivial: the number of future points is $0$, and there is nothing to prove. The base case for type II region is easy: one point arrives in the region, then according to the algorithm it will be matched to the responsible point (since $R_1$ and $R_2$ are empty).

    For the inductive step, consider a type I region $R$, and suppose that $2k$ points will arrive inside the region. The first point that arrives in the region becomes responsible for this region, changing its type to II. There are $2k-1$ future points arriving in this region, and the claim follows by the inductive assumption applied to the type II region.
    Now, consider a type II region $R$, and suppose that $2k-1$ points arrive inside the region. Let $q$ be the responsible point for $R$, and let $p$ be the first point arriving inside $R$. Note that $\overline{pq}$ partitions $R$ into $R_1$ and $R_2$ and an even number of points will arrive in $R_1$ and $R_2$ in total. There are two possible cases. Case 1: If $R_1$ and $R_2$ are both of type I, then $p$ is matched with $q$. Case 2: If $R_1$ and $R_2$ are both of type II, then the match is not formed. In both
    cases, the inductive step is established for $R$ by invoking induction on $R_1$ and $R_2$.

    Observe that the entire plane is a region of type I at the beginning of the execution of the algorithm (prior to arrival of any points). Thus, the correctness of the algorithm follows by applying the above claim to this region.

    To establish the bound on advice complexity, observe that by the definition of the algorithm, \SAM matches the most recent point whenever $D[i]$ is $1$ and does not match otherwise. Thus, $D$ has an equal number of zeros and ones and no prefix of $D$ has more ones than zeros. This makes $D$ a Dyck word and it is known that there are $C_n$ Dyck words of size $2n$~\cite{Roman15}.
\end{proof}

\section{Conclusion}
We introduced the weighted version of the Online Weighted Non-Crossing Matching problem. We established that no deterministic algorithm can guarantee a constant competitive ratio for this problem. Then, we explored several ways of overcoming this limitation and presented new algorithms and bounds for each of the considered regimes. In particular, we presented the results for deterministic algorithms when weights of the points are restricted to lie in the range $[1, U]$, randomized algorithms without restrictions on weights, deterministic algorithms with revoking, and deterministic algorithms with advice. 

Many open problems remain. In particular, our bounds are not tight, and closing the gap in any of the settings would be of interest. It is also interesting to study the online setting of other versions of the problem that were considered in the offline literature. For example, one could allow an online algorithm to create some number of crossings up to a given budget, or one could consider the $k$-non-crossing constraint as inspired from understanding RNA structures. In this paper, we considered the vertex-weighted version, but one could also consider an edge-weighted version of the problem, where edge weights could be either abstract or related to geometry.

\bibliography{refs.bib}
\bibliographystyle{plain}

\end{document}